\newtheorem{theorem}{Theorem}[section]
\newtheorem{lemma}[theorem]{Lemma}
\newtheorem{corollary}[theorem]{Corollary}
\newtheorem{proposition}[theorem]{Proposition}
\theoremstyle{definition}
\newtheorem{definition}[theorem]{Definition}
\newtheorem{remark}[theorem]{Remark}
\newtheorem{example}[theorem]{Example}
\newcommand{\EF}[1]{\if\relax\detokenize\expandafter{\@firstofone#1{}}\relax EF\xspace\else EF#1\fi}
\newcommand{\EFOne}{\EF{1}\xspace}
\newcommand{\EFM}{\EF{M$_{>0}$}\xspace}
\newcommand{\EFMzero}{\EF{M}$_{\geq0}$\xspace}
\newcommand{\x}{\mathbf{x}}
\newcommand{\s}{\mathbf{s}}
\newcommand{\val}{\mathbf{v}}
\newcommand{\M}{\mathcal{M}}
\newcommand{\A}{\mathcal{A}}
\newcommand{\MNW}{\mathrm{MNW}^{\mathrm{tie}}}
\newcommand{\mbar}{\overline{m}}
\newcommand{\kbar}{\overline{k}}
\definecolor{biaoshuaiFavourate}{rgb}{0.50, 0.30, 0.07}
\title{Truthful Fair Mechanisms for Allocating Mixed Divisible and Indivisible Goods}
\author{
	Zihao Li\\
	Nanyang Technological University\\
	\texttt{zihao004@e.ntu.edu.sg}\\
    \\
    Xinhang Lu\\
	UNSW Sydney\\
	\texttt{xinhang.lu@unsw.edu.au}\\
	\and
	Shengxin Liu\\
	Harbin Institute of Technology, Shenzhen\\
	\texttt{sxliu@hit.edu.cn}\\
	\\
	Biaoshuai Tao\\
	Shanghai Jiao Tong University\\
	\texttt{bstao@sjtu.edu.cn}\\
}
\date{}
\begin{document}
\maketitle

\begin{abstract}
We study the problem of designing truthful and fair mechanisms when allocating a mixture of divisible and indivisible goods.
We first show that there does not exist an EFM (envy-free for mixed goods) and truthful mechanism in general.
This impossibility result holds even if there is only one indivisible good and one divisible good and there are only two agents.
Thus, we focus on some more restricted settings.
Under the setting where agents have binary valuations on indivisible goods and identical valuations on a single divisible good (e.g., money), we design an EFM and truthful mechanism. When agents have binary valuations over both divisible and indivisible goods, we first show there exist EFM and truthful mechanisms when there are only two agents or when there is a single divisible good.
On the other hand, we show that the mechanism maximizing Nash welfare cannot ensure EFM and truthfulness simultaneously.
\end{abstract}

\section{Introduction}

Fair allocation problem considers how to fairly allocate scarce resources among interested agents (see excellent books or surveys by, e.g., \citet{brams1995envy,RobertsonWe98,Moulin19,Suksompong21,AmanatidisAzBi22}).
This problem has gained substantial attentions in various fields including computer science, mathematics, and economics, due to the ubiquity in numerous real-world scenarios (e.g., school choices~\citep{abdulkadirouglu2005new}, course allocations~\citep{budish2012multi}, and allocating computational resources~\citep{ghodsi2011dominant}).

The literature of fair allocation problem can be categorized by the type of resources being allocated.
The first line of work studies the allocation of \emph{divisible} goods, where the famous fairness criterion \emph{envy-freeness} has been extensively studied~\citep[see, e.g.,][]{Foley67,AzizMa16}.
In an envy-free allocation, each agent weakly prefers her own bundle than any other agent's bundle.
The second group studies the allocation of \emph{indivisible} goods, in which an envy-free allocation may fail to exist.
A common practice to circumvent the issue is to consider relaxed notions such as \emph{envy-freeness up to one good (\EFOne)} in which agent $i$\rq{}s envy towards agent~$j$ could be eliminated if we (hypothetically) remove a good in agent $j$\rq{}s bundle~\citep{Lipton04onapproximately,budish2011combinatorial}.

In addition to fairness, \emph{truthfulness} is an important consideration, given that agents report their private preferences over the resources.
Roughly speaking, a mechanism is said to be truthful if each agent cannot benefit by misreporting her preference.
The truthfulness aspect of fair allocation has been addressed in a number of recent papers~\citep[e.g.,][]{bogomolnaia2004random,caragiannis2009low,kurokawa2013cut,branzei2015dictatorship,mossel2010truthful,maya2012incentive,AzizYe14,li2015truthful,amanatidis2016truthful,menon2017deterministic,bei2017cake,BeiHuSu20}.
When the resource to be allocated is a cake (i.e., a \emph{heterogeneous} divisible good), the seminal work by \citet{ChenLaPa13} designed the first truthful envy-free mechanism when each agent's valuation is piecewise-uniform.
On the other hand, very recently, \citet{BuSoTa23} showed that for piecewise-constant valuations (which is a more general type of valuation functions than piecewise-uniform functions), there does not exist a (deterministic) truthful and envy-free mechanism.
For indivisible goods setting, \citet{AmanatidisBiCh17} provided a characterization of truthful mechanisms for two agents, and further showed that truthfulness and EF1 are incompatible even for two agents and five indivisible goods.
This negative result, however, does not hold any more for some restricted cases.
With binary valuations, \citet{HalpernPrPs20} and \citet{BabaioffEzFe21} independently designed truthful and EF1 mechanisms by using maximum Nash welfare with lexicographic tie-breaking.

The aforementioned results paved the way for understanding the interplay between truthfulness and fairness for the fair allocation problem with either divisible or indivisible goods.
However, when the resources contain a mixture of both, the study of designing truthful and fair allocation mechanisms is mostly absent, which is our focus in this paper.
The only exception we know of is the work by \citet{GokoIgKa22}.
They concerned indivisible goods allocation and designed a truthful and fair mechanism that achieves envy-freeness by \emph{subsidizing} each agent with at most~$1$, the maximum marginal value of each good for each agent.

We adopt a different perspective than that of \citet{GokoIgKa22}.
To be more specific, in our setting, the divisible and indivisible goods to be allocated are both \emph{fixed} in advance~\citep{bei2021fair,BeiLiLu21,BhaskarSrVa21,LuPeAz23,NishimuraSu23}.
In a setting with mixed divisible and indivisible goods (\emph{mixed goods} for short), \citet{bei2021fair} proposed a new fairness notion called \emph{envy-freeness for mixed goods (EFM)} that generalizes both envy-freeness and EF1, and showed constructively that an EFM allocation always exists for any number of agents.
Can we go one step further by designing \emph{truthful and EFM} mechanisms when allocating mixed goods?

\subsection{Our Results}


We study the problem of designing truthful and EFM mechanisms when allocating mixed divisible and indivisible goods to agents who have additive valuations over the goods.
To the best of our knowledge, this is the first work examining the compatibility of truthfulness and EFM.
Two variants of EFM are considered in this paper, and we use \EFMzero and \EFM to distinguish them (see \Cref{sec:prelim} for their formal definitions).
Intuitively speaking, \EFMzero requires that the envy-free criterion is imposed even if the envied bundle contains a \emph{positive amount} of divisible goods and the \EFOne criterion is used otherwise.
Slightly differently, \EFM only requires to impose the envy-free criterion if the envied bundle contains divisible goods with \emph{positive value}.
It can be verified that \EFMzero implies \EFM.
While in the following we mostly present our results regarding to \EFM, some of the results can be extended to the case of \EFMzero.

We start by giving in \Cref{sec:general-impossibility} a strong impossibility result showing that truthfulness and \EFM are incompatible even if there are only two agents and the goods to be allocated consist of only a single indivisible good and a single divisible good.
Since we can normalize the valuations so that agents' valuations on the indivisible good are~$0$ or~$1$, as a corollary to the impossibility result, truthfulness and \EFM are incompatible for two agents with \emph{binary} valuations on indivisible goods.
Truthfulness and \EFM, however, are compatible if we further restrict the expressiveness of agents' valuations on the divisible goods.

First, in \Cref{sec:binary-IND+money}, we design a truthful and \EFM mechanism when agents have binary valuations over indivisible goods and an \emph{identical} valuation (not necessarily binary) over a single divisible good.
In addition, the allocations output by our mechanism satisfy some nice efficiency properties including leximin and Maximum Nash Welfare (MNW).
Next, in \Cref{sec:binary-IND-and-DIV}, we consider the case where agents have binary valuations over all the goods.
Specifically, we design truthful and \EFM mechanisms when (i) there are \emph{two} agents (and an arbitrary number of goods), or (ii) the mixed goods consist of an arbitrary number of indivisible goods and a single divisible good.
Technically speaking, in general, our mechanisms first make use of the truthful and \EFOne mechanism of \citet{HalpernPrPs20} to allocate the (binary) indivisible goods, and next design different methods to allocate the divisible good(s) in the three different scenarios described above.

\section{Preliminaries}
\label{sec:prelim}

Let $[s] \coloneqq \{1, \dots, s\}$.
We use $N = [n]$ to denote the set of~$n$ agents.
The set of the goods is denoted by $(G, D)$, where $G = \{g_1, \dots, g_{m}\}$ is the set of~$m$ \emph{indivisible goods} and $D = \{d_1, \dots, d_{\mbar}\}$ is the set of~$\mbar$  \emph{divisible goods}.
Each divisible good is \emph{homogeneous}, meaning that an agent's value on each divisible good only depends on the \emph{fraction} of this divisible good allocated to her.
(We will formally define the valuations of the agents later.)
Denote by $\A = (A_1, \dots, A_n)$ an \emph{allocation}, where we assign bundle~$A_i$ to agent~$i$.
Each~$A_i$ is composed by a pair $(G_i, \x_i)$, where $G_i$ is a subset of the indivisible goods allocated to agent~$i$ and $\x_i = (x_{i1}, \dots, x_{i\mbar})$ specifies how divisible goods are allocated to agent~$i$---specifically, $x_{i\kbar}$ denotes the fraction of (homogeneous) divisible good $d_{\kbar}$ allocated to agent~$i$.
Naturally, an allocation $\A = ((G_1, \x_1), \dots, (G_n, \x_n))$ must satisfy that $(G_1, \dots, G_n)$ is a partition of~$G$ and that $\sum_{i = 1}^n x_{i\kbar} = 1$ for each $\kbar = 1, \dots, \mbar$ (in particular, we have assumed each divisible good has $1$ unit of amount).

We assume that each agent~$i \in N$ has an additive valuation function~$v_i$ and call $(v_1, \dots, v_n)$ a \emph{valuation profile}.
That is, agent~$i$\rq{}s value on a bundle $(G_j, \x_j)$ is given by
\[
v_i(G_j, \x_j) = \sum_{g_k \in G_j} v_i(g_k) + \sum_{\kbar = 1}^{\mbar} x_{j \kbar} \cdot v_i(d_{\kbar}),
\]
where $v_i(g_k)$ is agent~$i$\rq{}s value on the indivisible good~$g_k$ and $v_i(d_{\kbar})$ is agent~$i$\rq{}s value on the divisible good~$d_{\kbar}$.
We slightly abuse the notation by letting $v_i(G_j) = \sum_{g_k \in G_j} v_i(g_k)$ and $v_i(\x_j) = \sum_{\kbar = 1}^{\mbar} x_{j \kbar} \cdot v_i(d_{\kbar})$.
We say that agents\rq{} valuations are \emph{binary} if $v_i(g_k) \in \{0, 1\}$ and $v_i(d_{\kbar}) \in \{0, 1\}$ for every $i$, $k$ and $\kbar$.
Correspondingly, we say that agents have \emph{binary valuations on indivisible goods} if $v_i(g_k) \in \{0, 1\}$ for every~$i$ and~$k$ and agents have \emph{binary valuations on divisible goods} if $v_i(d_{\kbar}) \in \{0, 1\}$ for every~$i$ and~$\kbar$.

An allocation is \emph{envy-free} if each agent believes (according to her own valuation) her own allocated bundle is weakly more valuable than that of every other agent's.
In our case with both divisible and indivisible goods, this means $v_i(G_i, \x_i) \geq v_i(G_j, \x_j)$ holds for every pair of agents~$i$ and~$j$.
An envy-free allocation may not exist even if there are only indivisible goods (i.e., $D = \emptyset$).
For allocating only indivisible goods, a commonly adopted relaxation of envy-freeness is \emph{envy-freeness up to one item} (\EFOne), and it is well-known that an \EFOne allocation always exists~\citep{Lipton04onapproximately,budish2011combinatorial}.

\begin{definition}[\EFOne]\label{def:EF1}
For $D = \emptyset$, given a valuation profile $(v_1, \dots, v_n)$, an allocation $(G_1, \dots, G_n)$ is \EFOne if for any pair of $i, j \in N$, there exists a good $g \in G_j$ such that $v_i(G_i)\geq v_i(G_j \setminus\{g\})$.
\end{definition}

With mixed goods, we adopt the fairness notion called \emph{envy-freeness for mixed goods (EFM)}~\citep{bei2021fair}.
There are two variants of EFM, and we use \EFMzero and \EFM to distinguish them.
As we will see shortly, \EFMzero implies \EFM.
We start by presenting the intuition of \EFM: For any pair of agents $i, j \in N$, agent~$i$ should not envy agent~$j$, as stated in Point~2 of \Cref{def:EFM} below, with the exception that the divisible part allocated to agent~$j$ is worthless to agent~$i$, in which case the \EFOne condition holds for the indivisible part (see Point~1 of \Cref{def:EFM}).

\begin{definition}[\EFM]\label{def:EFM}
Given a valuation profile $(v_1,\dots,v_n)$, an allocation $((G_1,\x_1),\ldots,(G_n,\x_n))$ is \EFM if the followings hold for any pair of $i, j \in N$.
\begin{enumerate}
\item If $v_i(\x_j) = 0$ and $G_j\neq\emptyset$, then there exists a good $g \in G_j$ such that $v_i(G_i,\x_i)\geq v_i(G_j\setminus\{g\},\x_j)$.
\item Otherwise, $v_i(G_i,\x_i)\geq v_i(G_j,\x_j)$.
\end{enumerate}
\end{definition}

\EFMzero, on the other hand, imposes envy-free condition as long as agent~$j$'s bundle has any positive amount of divisible goods, even if agent~$i$ values the divisible part at~$0$:
\begin{definition}[\EFMzero]\label{def:EFMzero}
Given a valuation profile $(v_1,\dots,v_n)$, an allocation $((G_1,\x_1),\ldots,(G_n,\x_n))$ is \EFMzero if the followings hold for any pair of $i, j \in N$.
\begin{enumerate}
\item If $\x_j=\mathbf{0}$ and $G_j\neq\emptyset$, then there exists a good $g \in G_j$ such that $v_i(G_i,\x_i)\geq v_i(G_j\setminus\{g\},\x_j)$.
\item Otherwise, $v_i(G_i,\x_i)\geq v_i(G_j,\x_j)$.
\end{enumerate}
\end{definition}

It is easy to see that \EFMzero implies \EFM.
It also directly follows from the above definitions that when there are only indivisible goods (i.e., $D=\emptyset$), \EFM and \EFMzero reduce to \EFOne; when there are only divisible goods (i.e., $G=\emptyset$), \EFM and \EFMzero reduce to envy-freeness.

A \emph{mechanism} is a function $\M$ that maps the set of $n$ valuation functions $(v_1, \dots, v_n)$ to an allocation $((G_1, \x_1), \dots, (G_n, \x_n))$.
We only consider deterministic mechanisms in this paper.
In the game-theoretical setting, each agent $i$ submits a valuation $v_i\rq{}$ to $\M$ which may or may not be her true valuation $v_i$.
A mechanism $\M$ is \EFM/\EFMzero if, upon receiving every input $(v_1\rq{}, \dots, v_n\rq{})$, it outputs an allocation that is \EFM/\EFMzero with respect to $(v_1\rq{}, \dots, v_n\rq{})$.
A mechanism $\M$ is \emph{truthful} if it is each agent $i$\rq{}s dominant strategy to truthfully report her valuation $v_i$.
Formally, let $v_i$ be the true valuation function of an arbitrary agent $i$ and $v_i\rq{}$ be an arbitrary valuation function, for any $n-1$ valuation functions $v_1, \dots, v_{i-1}, v_{i+1}, \dots, v_n$ of the remaining $n-1$ agents, we have
$$v_i(G_i, \x_i) \geq v_i(G_i\rq{}, \x_i\rq{}),$$
where $(G_i, \x_i)$ is the bundle allocated to agent $i$ by $\M$ when receiving the input $(v_1, \dots, v_{i-1}, v_i, v_{i+1}, \dots, v_n)$ and $(G_i\rq{}, \x_i\rq{})$ is the bundle allocated to agent $i$ by $\M$ when receiving the input $(v_1, \dots, v_{i-1}, v_i\rq{}, v_{i+1}, \dots, v_n)$.

We make the standard \emph{free-disposal} assumption \citep[see, e.g.,][]{ChenLaPa13,bei2017cake,HalpernPrPs20}, which assumes that a good $g_k$ (resp., $d_{\kbar}$) is discarded by the mechanism if $v_i(g_k)=0$ (resp., $v_i(d_{\kbar})=0$) for all $i \in N$.
Without this assumption, we may run into uninteresting technicality which will be further discussed in Section~\ref{secmnw}.

\subsection{Maximum Nash Welfare and Leximin}
\label{secmnw}

We now proceed to review the concepts and some properties of \emph{Maximum Nash Welfare} (MNW) allocations and \emph{leximin} allocations, which will be useful in our paper.

The definitions of MNW and leximin allocations apply to general fair division settings.
For each $i \in [n]$, if we are only allocating indivisible goods, then $A_i = G_i$; if we are only allocating divisible goods, then $A_i = \x_i$; for mixed indivisible and divisible goods, $A_i = (G_i, \x_i)$.

\begin{definition}[MNW]
\label{def:MNW}
Given a valuation profile $(v_1,\dots,v_n)$, an allocation $(A_1,\dots,A_n)$ is a \emph{Maximum Nash Welfare (MNW)} allocation if it first maximizes the number of the agents receiving positive values, i.e., $|\{i \in [n] : v_i(A_i) > 0\}|$, and, subject to this, maximizes the product of the positive utilities, i.e., $\prod_{i : v_i(A_i) > 0} v_i(A_i)$.
\end{definition}

When allocating only divisible goods, an MNW allocation is always envy-free~\citep{varian1974equity}.
When allocating only indivisible goods, an MNW allocation is always \EFOne~\citep{CaragiannisKuMo19}.
However, with mixed goods, an MNW allocation may not be \EFM~\citep{bei2021fair}.

\begin{definition}[Leximin]\label{def:leximin}
Given two vectors $\s_1, \s_2 \in \mathbb{R}^n$, let $\s_1\rq{}$ and $\s_2\rq{}$ be the vectors obtained by sorting $\s_1$ and $\s_2$ in ascending order respectively.
We say that $\s_1$ \emph{leximin-dominates} $\s_2$ if there exists $i \in [n]$ such that $\s_1\rq{}$ and $\s_2\rq{}$ are identical for the first $i-1$ entries and the $i$-th entry of $\s_1\rq{}$ is greater than the $i$-th entry of $\s_2\rq{}$.
Given a valuation profile $(v_1,\dots,v_n)$, an allocation $(A_1,\ldots,A_n)$ is a \emph{leximin} allocation if $(v_1(A_1),\dots,v_n(A_n))$ is maximum in the total order induced by leximin-domination among all allocations.
\end{definition}

In other words, a leximin allocation maximizes the minimum among the agents' utilities; among all such allocations, it considers those maximizing the second smallest utility, and so on.

When allocating indivisible goods \emph{with binary valuations}, \cite{AzizRe20} and \cite{HalpernPrPs20} showed the equivalence of \emph{Maximum Nash Welfare} (MNW) allocations and \emph{leximin} allocations.
In addition, \cite{HalpernPrPs20} showed that the mechanism that outputs the MNW/leximin allocation with a consistent lexicographic tie-breaking rule is truthful.
For the purpose of our paper, we state \citeauthor{HalpernPrPs20}\rq{}s mechanism (referred to as $\MNW$) below.

\begin{theorem}[\citet{HalpernPrPs20}, $\MNW$]\label{thm:MNWtruthful}
For allocating only indivisible goods with binary valuations, there exists a truthful mechanism that always outputs an allocation that is both MNW and leximin.
\end{theorem}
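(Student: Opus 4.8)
The plan is to split the statement into three parts: (i) existence of an allocation that is simultaneously MNW and leximin, (ii) a deterministic selection rule producing such an allocation, and (iii) truthfulness of that rule. For part~(i), observe that under binary additive valuations each agent's utility is the integer $v_i(G_i)=|\{g\in G_i:v_i(g)=1\}|$, so there are only finitely many achievable utility profiles and an MNW allocation (as in \Cref{def:MNW}) exists. The convenient structural fact I would set up first is that, writing $S_i\coloneqq\{g:v_i(g)=1\}$ for the set of goods agent~$i$ truly values, the set of achievable utility profiles $(v_1(G_1),\dots,v_n(G_n))$ is exactly the set of integer points of a polymatroid arising from the bipartite ``likes'' structure between agents and goods. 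On such an integral polymatroid the Nash-welfare-maximizing points and the leximin (egalitarian, Lorenz-dominant) points coincide; this is the MNW$=$leximin equivalence already recalled above from \citet{AzizRe20,HalpernPrPs20}, which in particular yields a common utility profile realizing both \Cref{def:MNW} and \Cref{def:leximin}.

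For part~(ii), the mechanism computes this (utility-)unique leximin profile and, among all allocations realizing it, selects one by a fixed tie-breaking rule: I would fix once and for all a total order on allocations (e.g.\ a lexicographic order induced by a fixed priority order on agents and on indivisible goods) and output the least such allocation. This makes the mechanism a well-defined deterministic function $\M$, independent of the reported valuations, which is what \Cref{thm:MNWtruthful} requires, and by construction every output is both MNW and leximin.

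The substance of the proof is part~(iii). Fix an agent~$i$, fix the reports $v_{-i}$ of the others, let $A=(G_1,\dots,G_n)$ be the output under truthful report $v_i$ with $u_i=v_i(G_i)$, and suppose toward a contradiction that some misreport $v_i'$ yields output $A'$ with true value $t\coloneqq v_i(G_i')>u_i$. Since any good in $G_i'\cap S_i$ is genuinely valued by~$i$, the allocation $A'$ (after free-disposing goods~$i$ received but does not truly value) certifies that, \emph{in the true instance}, it is feasible for~$i$ to obtain $t$ liked goods. I would then invoke the key exchange/augmenting-path property of leximin allocations on the polymatroid: if a strictly larger utility for~$i$ is feasible in the true instance while~$i$ currently has the value $u_i$, then there is an alternating chain of good-reassignments that raises $v_i(G_i)$ by one unit while only drawing goods from agents whose utility strictly exceeds $u_i$ (so no agent is pushed to or below the current minimum). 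Such a reassignment strictly leximin-improves $A$ in the true instance, contradicting that $A$ is leximin. Hence $t\le u_i$, giving truthfulness; the consistent lexicographic tie-breaking is what rules out the remaining equality case (a misreport that merely reshuffles among equally-good leximin allocations cannot give~$i$ strictly more liked goods).

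The main obstacle is the augmenting-path lemma invoked in part~(iii): I must show precisely that whenever a higher liked-goods count is feasible for~$i$, the improvement can be routed through an alternating path that takes only from strictly-richer agents, so that the resulting transfer is a genuine leximin improvement rather than one that lowers some other agent below the minimum. Establishing this requires the matroid/flow exchange structure of the binary instance and careful bookkeeping along the chain of transfers; the secondary, easier, technical point is the reduction connecting an allocation computed on the \emph{declared} likes graph (which may contain edges~$i$ does not truly value) back to a feasible allocation in the \emph{true} likes graph, which follows because discarding $i$'s falsely-claimed goods only removes constraints.
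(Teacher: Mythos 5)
First, a point of comparison: the paper does not prove this statement at all---\Cref{thm:MNWtruthful} is imported as a black box from \citet{HalpernPrPs20} (with the related Lorenz-domination fact quoted separately as Proposition~\ref{thm:Lorenz} from \citet{BabaioffEzFe21}). So your proposal can only be judged against the known proofs in those papers, and against correctness itself.

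Judged that way, there is a genuine gap, and it sits exactly where you placed your ``main obstacle'': the augmenting-path lemma in part~(iii) is false as stated, and the argument built on it proves too much. You argue: if a utility $t > u_i$ is feasible for agent $i$ in the true instance, then there is an alternating chain raising $i$'s utility by one unit that draws only from agents whose utility strictly exceeds $u_i$, contradicting leximin optimality of $A$; hence $t \le u_i$. But feasibility of $t>u_i$ is almost always satisfiable---the allocation giving $i$ \emph{every} good she likes is feasible---so your argument, if valid, would show that a leximin allocation caps each agent at her maximum feasible utility over all allocations, i.e., that leximin allocations generally do not exist. What the exchange structure of the binary (bipartite/polymatroid) instance actually yields is weaker: an alternating path from $i$ to some donor $j$ who holds strictly fewer liked goods in $A''$ than in $A$, and swapping along it turns the pair $(u_i,u_j)$ into $(u_i+1,u_j-1)$. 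Leximin optimality of $A$ then forces every such donor to satisfy $u_j \le u_i+1$---the path must end at a \emph{poor} agent, not a rich one, which is the opposite of what you claim. In particular, when $u_j = u_i+1$ the swap merely permutes the sorted profile and produces no contradiction, and this is not a negligible corner case: it is precisely the configuration a profitable misreport would create. Consequently, truthfulness cannot follow from leximin optimality of $A$ in the true instance plus mere feasibility of the cleaned misreport allocation $A''$; the actual proofs of \citet{HalpernPrPs20} and \citet{BabaioffEzFe21} must in addition exploit that $A'$ is leximin/Lorenz-dominating in the \emph{misreported} instance, play the two instances against each other, and use the consistent priority/lexicographic tie-breaking to eliminate the surviving boundary case $t = u_i+1$. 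Your remark that tie-breaking handles ``the remaining equality case $t=u_i$'' misplaces its role: equality is harmless for truthfulness (only a weak inequality is needed), whereas $t=u_i+1$ is where the entire difficulty lives. A secondary slip: under binary valuations the MNW/leximin utility vector is \emph{not} unique agent-by-agent (one good liked by two agents already admits profiles $(1,0)$ and $(0,1)$); only the sorted profile is unique. So ``compute the utility-unique leximin profile, then tie-break among its realizations'' is not well defined, and the tie-breaking genuinely changes individual agents' utilities---which is exactly why its consistency is a substantive ingredient of truthfulness rather than a bookkeeping device.
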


\paragraph{The free-disposal assumption.}
In the following example, we show that $\MNW$ may not be truthful without the free-disposal assumption.
    \begin{example}
        We need to allocate five indivisible goods to three agents. The specific valuation profile is listed in the table.
        \begin{table}[H]
            \centering
            \begin{tabular}{@{}*{7}{c}@{}}
            \toprule
            & $g_1$ & $g_2$ & $g_3$ & $g_4$ & $g_5$\\
            \midrule
            $v_1$ & $1$ & $1$ & $1$ & $1$ & $1$\\
            $v_2$ & $1$ & $1$ & $1$ & $0$ & $1$\\
            $v_3$ & $1$ & $1$ & $1$ & $0$ & $1$\\
            \bottomrule
            \end{tabular}
            \end{table}
        In this example, if the tie-breaking rule is to prioritize agent $1$, the allocation is to allocate $g_4$ and one other item to agent $1$, and to allocate the remaining three items to agents $2$ and $3$.
        Without the free-disposal assumption, if we allocate the worthless item to agent $1$, this will lead to a possible misreport.
        When agent $1$ misreports her valuation towards $g_4$ (from $1$ to $0$), under $\MNW$, agent $1$ can still get two items among $\{g_1,g_2,g_3,g_5\}$, while the additional $g_4$ can bring her additional value.
    \end{example}
    In this example, we can see that an arbitrary allocation of globally worthless items may make $\MNW$ no longer truthful. This is just an uninteresting corner case, which is also a reason why we make the free-disposal assumption.

\section{General Impossibility Results}
\label{sec:general-impossibility}

Unfortunately, truthfulness and \EFM are incompatible in general.
In this section, we prove that there does not exist a truthful and \EFM mechanism even under the very restricted settings where there are only one indivisible good and one divisible good and there are only two agents.

Our impossibility result for mixed divisible and indivisible goods holds for a minimum number of agents (which is $2$) and a minimum number of goods (which is $2$).
This is in contrast with \cite{AmanatidisBiCh17}'s  $2$-agent-$5$-good impossibility result for only indivisible goods.

\begin{theorem}\label{thm:impossibilityGeneral}
There does not exist a truthful and \EFM (and thus \EFMzero) mechanism even when there are only two agents and the set of goods consists of one indivisible good and one divisible good.
\end{theorem}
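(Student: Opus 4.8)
The plan is to derive a contradiction from the behavior of a hypothetical truthful \EFM mechanism $\M$ on a one-parameter family of profiles in which both agents value the single indivisible good $g$ at $1$ and differ only in their value for the single divisible good $d$. Throughout I write a profile as $((1,a),(1,b))$, meaning $v_1(g)=v_2(g)=1$, $v_1(d)=a$, $v_2(d)=b$; since no good is ever globally worthless, free disposal plays no role. I would first work out, by elementary case analysis, exactly which allocations are \EFM on such profiles. Writing $x$ for the fraction of $d$ handed to the agent who does not receive $g$, the envied agent's envy-free constraint forces $x \ge \tfrac12+\tfrac1{2b}$ (when $b>0$ and the $g$-holder keeps a positive share), while the $g$-holder's own constraint forces $x \le \tfrac12+\tfrac1{2a}$; together with the boundary cases this pins down feasibility. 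For instance, on $((1,1),(1,t))$ with $t>1$ every \EFM allocation gives $g$ to agent~$1$ and agent~$2$ a fraction $x\in[\tfrac12+\tfrac1{2t},\,1]$, while giving $g$ to agent~$2$ is infeasible.

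The crux is a pinning-down lemma: on $((1,1),(1,t))$ with $t>1$, truthfulness forces $\M$ to give $g$ to agent~$1$ and \emph{exactly} the fraction $x=\tfrac12+\tfrac1{2t}$ to agent~$2$. The ``$\le$'' half is the \EFM upper bound above, which caps agent~$1$'s utility at $\tfrac32-\tfrac1{2t}$. For the matching lower bound I let agent~$1$ (whose true value is $(1,1)$) contemplate \emph{overreporting} her value for $d$ as $(1,a)$ with $1<a<t$. On $((1,a),(1,t))$ I show every \EFM allocation must give $g$ to agent~$1$ and let her keep at least a $\tfrac12-\tfrac1{2a}$ fraction of $d$ --- handing $g$ to agent~$2$, or surrendering all of $d$, is infeasible precisely because $1<a<t$. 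Hence this deviation guarantees agent~$1$ a true utility of at least $\tfrac32-\tfrac1{2a}$; letting $a\to t^-$ and invoking truthfulness forces her truthful utility to be at least $\tfrac32-\tfrac1{2t}$, meeting the cap.

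With the lemma in hand I turn to the central symmetric profile $P=((1,1),(1,1))$, whose only \EFM allocations are $\alpha$ (give $g$ to agent~$1$ and all of $d$ to agent~$2$) and its mirror image $\beta$. Suppose $\M(P)=\alpha$. Consider agent~$2$ with true value $(1,1+\eta)$: by the lemma her truthful utility is $\tfrac{(1+\eta)+1}{2}=1+\tfrac\eta2$, whereas by reporting $(1,1)$ she reaches $P$, is handed all of $d$ under $\alpha$, and enjoys true utility $(1+\eta)\cdot 1=1+\eta>1+\tfrac\eta2$ --- a strictly profitable deviation. Hence $\M(P)\ne\alpha$, so $\M(P)=\beta$. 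Running the identical argument with the roles of the two agents exchanged (the entire setup is symmetric under relabelling) forces $\M(P)=\alpha$. These two conclusions are contradictory, so no truthful \EFM mechanism exists, which also rules out \EFMzero since it implies \EFM.

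I expect the main obstacle to be the pinning-down lemma, and specifically the clean execution of the overreporting argument: I must verify that on $((1,a),(1,t))$ with $1<a<t$ the good $g$ \emph{provably} stays with agent~$1$ and that she retains a guaranteed positive fraction of $d$, so that the deviation lower bound exactly meets the \EFM upper bound in the limit $a\to t^-$. The underlying intuition is that both agents have an incentive to grab the divisible good --- agent~$1$ by exaggerating her taste for it, agent~$2$ because she is the one being compensated --- and these opposing incentives squeeze the allocation on $((1,1),(1,t))$ down to a single point; the inherent agent-symmetry of $P$ then makes that point impossible to choose consistently. Everything else --- the \EFM feasibility computations and the final symmetric clash at $P$ --- is routine, and since the contradiction is strict it needs no continuity or tie-breaking assumptions on $\M$.
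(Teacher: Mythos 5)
Your proof is correct, and it takes a genuinely different route from the paper's, though both are built from the same raw ingredients. Both arguments rest on the same feasibility computation (on a two-agent profile with common value $1$ for $g$ and distinct values on $d$, every \EFM allocation hands $g$ to the agent with the lower value for $d$, and the split of $d$ is confined to a non-degenerate interval), and both exploit the same two manipulations: the $g$-holder exaggerating her value for $d$, and the other agent understating hers. The paper, however, finishes in one stroke at a single asymmetric profile with values $b>a>1$ on $d$: whatever point $x$ of the feasible interval $\left[\frac{a-1}{2a},\frac{b-1}{2b}\right]$ (fraction kept by the $g$-holder) the mechanism selects, either $x$ equals the left endpoint, in which case agent $1$ gains by reporting a larger $a$, or $x$ exceeds it, in which case agent $2$ gains by reporting a smaller $b'$ with $\frac{a-1}{2a}<\frac{b'-1}{2b'}<x$; each case needs only a single well-chosen misreport and no limiting argument. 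You instead prove a strictly stronger intermediate statement --- your pinning-down lemma, obtained by taking a supremum over a continuum of overreports $a\to t^-$ --- showing that any truthful \EFM mechanism must output the \emph{exact} extreme point $x=\frac12+\frac1{2t}$ on every profile $((1,1),(1,t))$, and you then derive the contradiction at the fully symmetric profile $P$, where \EFM leaves only the two mirror allocations and whichever one is chosen, the over-compensated agent's nearby type $(1,1+\eta)$ profitably deviates \emph{into} $P$ by underreporting. What your route buys is an exact characterization of the behavior of any truthful \EFM mechanism on a whole family of profiles, plus a clean agent-symmetry finish that requires no case analysis beyond the binary $\alpha/\beta$ choice; what it costs is extra machinery (the supremum step, the verification that $g$ provably stays with agent $1$ under every overreport $1<a<t$, and the mirrored lemma), all of which the paper's two-case dichotomy avoids. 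Both proofs dispose of \EFMzero identically, via the implication from \EFM.
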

\begin{proof}
It suffices to prove the statement for \EFM.
Consider the instance with two agents $\{1, 2\}$, one indivisible good and one divisible good.
Both agents have value~$1$ on the indivisible good, and have value~$a$ and~$b$ on the divisible good, where $b>a>1$.

Firstly, we prove that all the \EFM allocations must allocate the indivisible good to agent~$1$.
Suppose for the sake of contradiction that the indivisible good is allocated to agent~$2$.
To guarantee \EFM, agent~$2$ must get at least a fraction $\frac{b-1}{2b}$ from the divisible good in order to not envy agent~$1$, and agent~$2$ must get at most a fraction of $\frac{a-1}{2a}$ from the divisible good in order to avoid that agent~$1$ envies agent~$2$. This is impossible as $\frac{a-1}{2a}<\frac{b-1}{2b}$.

Therefore, the possible \EFM allocations can be described as follows. Agent~$1$ receives the indivisible good and a fraction~$x$ of the divisible good, and agent~$2$ receives a fraction~$1-x$ of the divisible good.
For the reasons similar as above, we must have $\frac{a-1}{2a}\leq x\leq\frac{b-1}{2b}$ to guarantee \EFM.
If the mechanism outputs an allocation with $x=\frac{a-1}{2a}$, agent $1$ can misreport her valuation by increasing~$a$, which increases agent~$1$\rq{}s received value as~$x$ increases.
If the mechanism outputs an allocation with $x>\frac{a-1}{2a}$, agent~$2$ can misreport her valuation by decreasing~$b$ so that $\frac{a-1}{2a}<\frac{b-1}{2b}<x$.
In this case, $x$ will decrease and agent~$2$ receives more value.
\end{proof}

When there is only one indivisible good and one divisible good, we can assume without loss of generality that agents' valuations on the indivisible good are binary (as we did in the proof of Theorem~\ref{thm:impossibilityGeneral}) or agents' valuations on the divisible good are binary.
This is because we can normalize the valuations of the agents.
Thus, Theorem~\ref{thm:impossibilityGeneral} straightforwardly implies the following corollary.

\begin{corollary}\label{cor:impossibilityGeneral}
There does not exist a truthful and \EFM (and thus \EFMzero) mechanism even when there are two agents and agents have binary valuations on either the indivisible goods or the divisible goods.
\end{corollary}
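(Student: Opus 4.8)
The plan is to deduce both cases of the corollary from \Cref{thm:impossibilityGeneral}, using the observation that \EFM and truthfulness are each invariant under multiplying an individual agent's valuation by a positive constant. Concretely, every requirement in \Cref{def:EFM} is either a comparison $v_i(\cdot)\geq v_i(\cdot)$ or the test $v_i(\x_j)=0$, and each of the inequalities defining truthfulness has the form $v_i(\cdot)\geq v_i(\cdot)$; replacing $v_i$ by $c_i v_i$ with $c_i>0$ leaves all of them unchanged. It suffices to treat \EFM, since \EFMzero implies \EFM and hence any truthful \EFMzero mechanism would already be a truthful \EFM one. The binary-on-indivisible case then needs no further work: the instance built in the proof of \Cref{thm:impossibilityGeneral} already assigns both agents value $1$ to the single indivisible good, which is binary, so that argument proves the impossibility verbatim on the binary-on-indivisible domain.

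For the binary-on-divisible case I would argue by contradiction, turning a hypothetical truthful \EFM mechanism $\M$ for that restricted domain into a truthful \EFM mechanism on the full two-good, two-agent domain, contradicting \Cref{thm:impossibilityGeneral}. Given a reported profile $(v_1',v_2')$, I would set $c_i=1/v_i'(d)$ when $v_i'(d)>0$ and $c_i=1$ otherwise, form $\hat v_i'=c_i v_i'$ so that $\hat v_i'(d)\in\{0,1\}$, run $\M(\hat v_1',\hat v_2')$, and output its allocation. Because each $\hat v_i'$ is a positive rescaling of $v_i'$, the scale-invariance of \EFM immediately gives that the output, being \EFM with respect to $(\hat v_1',\hat v_2')$, is also \EFM with respect to $(v_1',v_2')$; thus the constructed mechanism is \EFM.

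The crux, and the step I expect to require the most care, is showing that the constructed mechanism inherits truthfulness despite the rescaling factor $c_i$ depending on the \emph{reported} rather than the true valuation. Fixing agent $i$ with true valuation $v_i$ and normalized true valuation $\hat v_i=v_i/v_i(d)$ (which lies in $\M$'s domain), her true utility for whatever bundle $\M$ returns equals $v_i(d)\cdot\hat v_i(\cdot)$, so maximizing it coincides with maximizing $\hat v_i(\cdot)$. As the report $v_i'$ ranges over all valuations, the induced input $\hat v_i'$ ranges over all of $\M$'s (binary-on-divisible) domain---crucially including the divisible-value-$0$ reports, which are covered by the pass-through case---and the truthful report $v_i'=v_i$ induces exactly $\hat v_i$. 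Truthfulness of $\M$ for the true type $\hat v_i$ then certifies that no induced report beats $\hat v_i$ for $\hat v_i$-value, hence no deviation beats truthful reporting for agent $i$'s true value, so the constructed mechanism is truthful; this contradicts \Cref{thm:impossibilityGeneral}. As an alternative that sidesteps the reduction, one could instead rescale the hard instance of \Cref{thm:impossibilityGeneral} directly (giving both agents divisible value $1$ and indivisible values $1/a>1/b$) and re-run its short case analysis, observing that every profitable misreport there only adjusts an indivisible value and so stays within the binary-on-divisible domain.
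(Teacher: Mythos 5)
Your proposal is correct, and it rests on the same core observation as the paper's proof---that \EFM and truthfulness are invariant under rescaling each agent's valuation by a positive constant---but you execute it at a different level. The paper's argument is purely instance-level: it notes that the hard instance of \Cref{thm:impossibilityGeneral} already has binary indivisible values, and that normalizing each agent's valuation by her divisible value yields an equally hard instance with binary divisible values; this is exactly your closing ``alternative'' (divisible values $1$, indivisible values $1/a > 1/b$, with all profitable misreports in the original argument translating to changes of indivisible values only, hence staying inside the restricted domain). Your main route is instead a mechanism-level black-box reduction: a hypothetical truthful \EFM mechanism $\M$ on the binary-on-divisible domain is converted, by normalizing the \emph{reports}, into a truthful \EFM mechanism on the unrestricted two-good domain, contradicting \Cref{thm:impossibilityGeneral}. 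This is heavier machinery than the paper's one-line normalization, but it buys rigor precisely where the paper is terse: it makes explicit that truthfulness survives a rescaling whose factor $c_i$ depends on the reported rather than the true valuation, which you justify correctly by noting that the truthful report induces the normalized true type, that every deviation induces some report inside $\M$'s domain, and that maximizing $v_i$-value coincides with maximizing $\hat v_i$-value. Both routes constitute complete proofs; the paper's (and your alternative) is shorter, while your reduction is more general in that it transfers \emph{any} mechanism between the two domains rather than re-examining one instance.
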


As a remark, although we focus on deterministic mechanisms in this paper, Theorem~\ref{thm:impossibilityGeneral} and Corollary~\ref{cor:impossibilityGeneral} continue to hold for randomized mechanisms (that are universally \EFM and truthful in expectation).\footnote{Loosely speaking, \emph{universally \EFM} mechanisms randomize over deterministic \EFM mechanisms. A randomized mechanism is said to be \emph{truthful in expectation} if misreporting a valuation function cannot increase the expected utility of an agent.}
The proof is almost the same: We must allocate the indivisible good to agent~$1$ in order to guarantee universally \EFM, and $x$ in the proof becomes the \emph{expected} fraction of the divisible good allocated to agent~$1$.

The strong impossibility results suggest that truthfulness and \EFM may only be compatible in more restrictive settings.
We confirm our intuitions in the affirmative in the following sections by considering (1) the setting with a single divisible good of identical value to all agents and multiple indivisible goods on which agents have binary valuations (\Cref{sec:binary-IND+money}), and (2) the setting where agents' valuations are binary for \emph{both} indivisible and divisible goods (\Cref{sec:binary-IND-and-DIV}).

\section{Binary Valuations on Indivisible Goods and Identical Valuation on Single Divisible Good}
\label{sec:binary-IND+money}

In this section, we consider the setting where agents\rq{} valuations on indivisible goods are binary and there is one divisible good on which agents have an identical valuation.
This describes the natural scenario where we are allocating a set of indivisible goods and some amount of \emph{money}.
Here, the divisible good is just a sum of money, and an agent\rq{}s value on each indivisible good is described by the amount of money the agent is willing to pay for the good.
We will see that there exists a truthful and \EFMzero (and thus \EFM) mechanism under this setting.

For the ease of notation, we will use $x_1,\ldots,x_n$ to denote the fractions of the (unique) divisible good allocated to the $n$ agents.
Thus, each agent's allocated share is denoted by $(G_i,x_i)$.
We will use $u$ to denote each agent\rq{}s value on the (unique) divisible good.

\begin{algorithm}[t]
    \floatname{algorithm}{Mechanism}
    \caption{A truthful \EFMzero mechanism for binary valuations on indivisible good and identical valuation on a single divisible good}
    \label{alg:oneidentical}
    \begin{algorithmic}[1]
        \STATE use $\MNW$ (Theorem~\ref{thm:MNWtruthful}) to compute an MNW/leximin allocation $(G_1,\ldots,G_n)$ of $G$
        \STATE initialize $x_i\leftarrow 0$ for each agent $i$
        \STATE \textbf{while} $y:=1-\sum_{i=1}^nx_i>0$:
        \STATE \hspace{0.1cm} let $T_1=\arg\min_{i\in[n]}\{v_i(G_i,x_i)\}$
        \STATE \hspace{0.1cm} $\Delta\leftarrow \infty$
        \STATE \hspace{0.1cm} \textbf{if }$T_1\neq [n]$ \textbf{then}
        \STATE \hspace{0.4cm} let $T_2=\arg\min_{i\in[n]\setminus T_1}\{v_i(G_i,x_i)\}$
        \STATE \hspace{0.4cm} $\Delta \leftarrow v_j(G_j,x_j)-v_i(G_i,x_i)$ for $i\in T_1$, $j\in T_2$
        \STATE \hspace{0.1cm} for each $i\in T_1$, $x_i\leftarrow x_i+\min\{\frac\Delta{u},\frac{y}{|T_1|}\}$
        \STATE \textbf{return} allocation $((G_1,x_1),\ldots,(G_n,x_n))$
    \end{algorithmic}
\end{algorithm}

Our mechanism is shown in Mechanism~\ref{alg:oneidentical}.
The mechanism first allocates the indivisible goods by using $\MNW$ (Theorem~\ref{thm:MNWtruthful}) and then iteratively allocates the divisible good.
In each iteration, the mechanism identifies a set $T_1$ of agents who receive minimum values in the current partial allocation.
The mechanism then attempts to compensate agents in $T_1$ with some fraction of the unallocated divisible good.
This is done by a ``water-filling'' process:
we allocate the divisible good to the agents in $T_1$ at an equal rate, until the divisible good is fully allocated, in which case the mechanism terminates, or until the value received by each agent in $T_1$ reaches the second minimum value received among all agents (i.e., the value received by an agent in $T_2$), in which case the mechanism proceeds to the next iteration.

\begin{theorem}
Mechanism~\ref{alg:oneidentical} is \EFMzero (and thus \EFM) and truthful. Moreover, it always outputs allocations that are both leximin and MNW.
\end{theorem}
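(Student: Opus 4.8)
The plan is to treat the three assertions---leximin/MNW optimality, \EFMzero, and truthfulness---separately, but to base all of them on a single structural description of the output. Once $\MNW$ has fixed the indivisible partition $(G_1,\dots,G_n)$, write $b_i := v_i(G_i)$ for the (reported) base values. By free disposal and binary valuations the $\MNW$ allocation is \emph{clean}, so every good in $G_i$ is valued by $i$; hence $b_i=|G_i|\ge v_k(G_i)$ for every $k$. The divisible loop is then ordinary water-filling: I would first record as a lemma that it produces a level $\lambda$ such that every agent with $b_i<\lambda$ is raised to exactly $t_i:=v_i(G_i,x_i)=\lambda$ (these are precisely the agents with $x_i>0$), while every agent with $b_i\ge\lambda$ keeps $x_i=0$ and $t_i=b_i$. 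All three proofs rely on this ``water-level'' picture together with the monotonicity $t_i\ge b_i$.

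For leximin and MNW I would observe that, for a \emph{fixed} base $b$, maximizing $\prod_k(b_k+u x_k)$ subject to $\sum_k x_k=1$ has first-order conditions $b_k+u x_k=\text{const}$ on the support and $b_k\ge\text{const}$ off it---i.e.\ the Nash-optimal split is exactly the water-filling split, which is also the leximin split. So, conditioned on the base, the output is simultaneously MNW and leximin. It remains to show that the $\MNW$ base is the globally correct one: because binary $\MNW$ is clean it maximizes utilitarian welfare $\sum_k b_k$ (every valued good reaches a valuer) and is leximin-optimal among indivisible allocations, and I would argue via a Lorenz/exchange argument that water-filling from this base dominates water-filling from any other base simultaneously for every prefix sum, yielding global leximin (and, by the same equalization, global MNW).

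For \EFMzero, fix agents $i,j$. If $x_j>0$ then $j$ is filled, so $t_j=\lambda$ and $u x_j=\lambda-b_j$; using cleanness ($v_i(G_j)\le b_j$) and $t_i\ge\lambda$ gives $v_i(G_i,x_i)=t_i\ge\lambda\ge v_i(G_j)+(\lambda-b_j)=v_i(G_j,x_j)$, i.e.\ envy-freeness (this also covers $G_j=\emptyset$, $x_j>0$). If $x_j=0$ and $G_j\neq\emptyset$, then $v_i(G_j,x_j)=v_i(G_j)$ and I invoke the \EFOne guarantee of $\MNW$: there is $g\in G_j$ with $v_i(G_i)\ge v_i(G_j\setminus\{g\})$, whence $t_i\ge v_i(G_i)\ge v_i(G_j\setminus\{g\})=v_i(G_j\setminus\{g\},x_j)$, which is exactly Point~1. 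The case $G_j=\emptyset$, $x_j=0$ is trivial. Thus \EFMzero follows cleanly from cleanness, the \EFOne-ness of $\MNW$, and $t_i\ge b_i$.

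The hard part is truthfulness. Fix agent $i$, let the others report truthfully, and let $G_i,\lambda$ describe the deviation run with report $v_i'$. Reported cleanness gives $v_i(G_i)\le v_i'(G_i)=|G_i|=:\tilde b_i$, so phantom goods (reported-$1$/true-$0$) only lower the true value and can be assumed absent; then $i$'s realized true value is $v_i(G_i)+\max(0,\lambda-\tilde b_i)$. If $i$ is \emph{unfilled} ($\tilde b_i\ge\lambda$) this equals $v_i(G_i)$, and truthfulness of $\MNW$ for the indivisible part (Theorem~\ref{thm:MNWtruthful}) bounds $v_i(G_i)\le v_i(G_i^{\mathrm{truth}})\le t_i^{\mathrm{truth}}$. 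The delicate case is when $i$ is \emph{filled}, where her true value is $\lambda$ and the question becomes whether appearing poorer---dropping goods to sink to the bottom---can push the water level above her truthful value. I would attack this by a monotonicity argument on $\lambda$, which solves $\sum_k\max(0,\lambda-\tilde b_k)=u$: lowering $i$'s base, while tracking where her relinquished goods go (reassigned to another valuer or discarded under free disposal), cannot raise $\lambda$ past $t_i^{\mathrm{truth}}$, since truthful reporting already secures her the largest indivisible base and water-fills her from that higher floor. Making this rigorous is the crux, and I expect it to be the main obstacle: the difficulty is the coupling whereby $i$'s report simultaneously perturbs the \emph{other} agents' bases through the $\MNW$ partition, so I anticipate needing a direct exchange argument on $\MNW$ outcomes (exploiting the matroid-like exchange structure of binary valuations) rather than a black-box appeal to Theorem~\ref{thm:MNWtruthful}.
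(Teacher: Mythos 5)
Your \EFMzero argument is correct, and in fact more direct than the paper's (the paper proves \EFMzero by induction over the while-loop iterations; you read it off the final water-level structure, cleanness, and the \EFOne guarantee of $\MNW$). Your leximin/MNW plan also follows the paper's Lorenz-domination route. The genuine gap is exactly where you flag it: truthfulness in the filled case is asserted, not proved, and the idea you are missing is the one the paper isolates as Proposition~\ref{prop:waterfillingoptimal}. Namely: under the \emph{true} profile $\val$, the water level produced by water-filling from the truthful $\MNW$ partition is at least the water level produced by water-filling from \emph{any} other partition of the indivisible goods --- in particular from the deviation partition $(G_1',\dots,G_n')$. This follows from Lorenz domination of the binary MNW/leximin allocation (Proposition~\ref{thm:Lorenz}): water conservation says that the final level times the number of filled agents equals $u$ plus the sum of the smallest bases, so Lorenz (prefix-sum) dominance of the truthful bases converts directly into dominance of water levels. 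This is the global statement that dissolves the ``coupling'' you anticipate needing a matroid-style exchange argument for: you never have to track how agent $i$'s report perturbs the other agents' bundles. You already invoke Lorenz domination for the leximin/MNW claim; reusing it for truthfulness is the missing step, and no exchange argument on $\MNW$ outcomes is needed.

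Two further, smaller defects. First, ``phantom goods can be assumed absent'' is not a legitimate reduction: deleting reported-$1$/true-$0$ goods from the report changes the entire $\MNW$ partition, so this does not reduce one deviation to another. The paper instead keeps the deviation as is and accounts for phantoms: with $\delta$ the number of phantom goods in $G_1'$, the deviator's true value equals the reported water level minus $\delta$ (when she is filled), and comparing the two water-filling processes run from the \emph{same} partition $(G_1',\dots,G_n')$ under $\val'$ versus under $\val$ shows the reported level exceeds the true-valuation level by at most $\delta$; chaining this with Proposition~\ref{prop:waterfillingoptimal} closes the bound. Second, for the MNW half of the final claim, ``by the same equalization'' is not enough: you need the separate fact that prefix-sum (Lorenz) dominance of sorted positive utility vectors implies dominance of their products, which the paper proves by an explicit, somewhat involved transformation argument; it is true, but it is not automatic.
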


The following three subsections aim to prove this theorem.

\subsection{Mechanism~\ref{alg:oneidentical} is \EFMzero}
We first present some simple yet important observations.

\begin{proposition}\label{prop:nonwasteful}
    Let $((G_1,x_1),\ldots,(G_n,x_n))$ be the allocation output by Mechanism~\ref{alg:oneidentical}. The followings hold.
    \begin{enumerate}
        \item For any agent $i$, $v_i(G_i,x_i)=|G_i|+u\cdot x_i$.
        \item For any agents $i$ and $j$, $v_i(G_i,x_i)\geq v_j(G_i,x_i)$.
    \end{enumerate}
\end{proposition}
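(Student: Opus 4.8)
The plan is to establish the two parts in sequence, since Part~2 follows almost immediately once Part~1 is in hand. Throughout I would use the two structural facts available to me: agents have binary valuations on the indivisible goods and an identical value $u$ on the single divisible good, and the partition $(G_1,\ldots,G_n)$ produced in Line~1 of the mechanism is an MNW allocation of $G$ by Theorem~\ref{thm:MNWtruthful}.

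For Part~1, expanding the additive valuation gives $v_i(G_i,x_i) = \sum_{g \in G_i} v_i(g) + u \cdot x_i$, so it suffices to show $\sum_{g \in G_i} v_i(g) = |G_i|$, i.e.\ that agent~$i$ values at $1$ every indivisible good she receives. I would prove this by a standard exchange argument against the MNW property. Suppose some $g \in G_i$ has $v_i(g) = 0$. By the free-disposal assumption, $g$ would have been discarded if all agents valued it at $0$; since $g$ is allocated and valuations are binary, some agent $j \neq i$ has $v_j(g) = 1$. Reallocating $g$ from $i$ to $j$ leaves $v_i(G_i)$ unchanged (as $v_i(g)=0$) while raising $v_j(G_j)$ by exactly $1$. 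Here I must use both tiers of Definition~\ref{def:MNW}: if $v_j(G_j) > 0$ already, the move strictly increases the Nash product while keeping the set of positively-valued agents fixed; if $v_j(G_j)=0$, the move strictly increases the count of agents with positive value. Either way the new allocation Nash-dominates the original, contradicting that $(G_1,\ldots,G_n)$ is MNW. Hence $v_i(g)=1$ for all $g \in G_i$, giving $v_i(G_i,x_i) = |G_i| + u \cdot x_i$.

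For Part~2, I would simply combine Part~1 with binary valuations. We have $v_i(G_i,x_i) = |G_i| + u \cdot x_i$ by Part~1, whereas $v_j(G_i,x_i) = \sum_{g \in G_i} v_j(g) + u \cdot x_i$. Since each $v_j(g) \in \{0,1\}$, we get $\sum_{g \in G_i} v_j(g) \le |G_i|$, and cancelling the common term $u \cdot x_i$ yields $v_j(G_i,x_i) \le |G_i| + u \cdot x_i = v_i(G_i,x_i)$.

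The main obstacle is Part~1, and specifically the exchange argument ruling out that an MNW allocation ever hands an agent a good she values at $0$. The delicate point is that MNW is lexicographic (first the number of positively-valued agents, then the product), so the improvement has to be verified separately in the two cases according to whether the recipient $j$ already had positive value; the free-disposal assumption is precisely what excludes the degenerate case in which $g$ is worthless to everyone and so no improving recipient exists. Once Part~1 is settled, Part~2 is routine.
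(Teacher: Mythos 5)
Your proposal is correct and follows essentially the same route as the paper's proof: Part~2 is reduced to Part~1 via the binary-valuation bound $v_j(G_i,x_i)\leq |G_i|+u\cdot x_i$, and Part~1 is established by the exchange argument against MNW together with the free-disposal assumption. If anything, your case split on whether $v_j(G_j)>0$ or $v_j(G_j)=0$ is slightly more careful than the paper, which simply asserts that the swap ``increases the Nash welfare'' without separating the two tiers of Definition~\ref{def:MNW}.
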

\begin{proof}
    Since agents have binary valuations on indivisible goods and the identical valuation $u$ on the unique divisible good, we have $v_j(G_i,x_i)\leq |G_i|+u\cdot x_i$ for any agents $i$ and $j$.
    Therefore, Point~1 implies Point~2, and it remains to show Point~1.
    To show Point~1, notice that $v_i(G_i,x_i)\neq |G_i|+u\cdot x_i$ is only possible when there exists $g\in G_i$ such that $v_i(g)=0$.
    Suppose this is the case for the sake of contradiction.
    If there exists another agent~$j$ with $v_j(g)=1$, then moving~$g$ from~$i$'s bundle to~$j$'s increases the Nash welfare, which contradicts to that $(G_1,\ldots,G_n)$ is an MNW allocation.
    If the good~$g$ is worth~$0$ to all agents, then~$g$ is discarded by the free-disposal assumption, which contradicts to $g\in G_1$.
\end{proof}

\begin{lemma}
Mechanism~\ref{alg:oneidentical} always outputs \EFMzero (and thus \EFM) allocations.
\end{lemma}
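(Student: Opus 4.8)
The plan is to show that the final allocation produced by Mechanism~\ref{alg:oneidentical} satisfies both conditions of \Cref{def:EFMzero} for every ordered pair $(i,j)$. By \Cref{prop:nonwasteful}, I can work entirely with the ``honest'' utilities $v_i(G_i,x_i)=|G_i|+u\cdot x_i$ and, crucially, with the fact that $v_i(G_j,x_j)\le |G_j|+u\cdot x_j = v_j(G_j,x_j)$; that is, agent $i$'s value for agent $j$'s bundle is at most the value $j$ assigns to her own bundle. This reduces envy statements to comparisons of the scalar quantities $v_i(G_i,x_i)$ and $v_j(G_j,x_j)$, which is exactly what the water-filling loop controls.

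First I would establish the central invariant of the water-filling process: at termination, for every pair of agents either $v_i(G_i,x_i)\ge v_j(G_j,x_j)$, or agent $j$ received no fraction of the divisible good (i.e.\ $x_j=0$) and $v_i(G_i,x_i)\ge v_j(G_j,x_j)-1$. The intuition is that the loop only ever raises the utilities of the current minimizers in $T_1$, levelling them up to the second-lowest utility before moving on, so any agent who is strictly below another agent at the end must have been ``topped up'' to equality unless the divisible good ran out. An agent with $x_j>0$ was at some point in $T_1$ and got water-filled, so her utility was raised to match the then-current second minimum; I would argue that this forces $v_j(G_j,x_j)\le v_i(G_i,x_i)$ for every $i$, because once an agent leaves the minimizer set her utility never decreases and the minimum utility is nondecreasing across iterations. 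The up-to-one slack in the integral case comes from the binary indivisible valuations: two agents' $|G_k|$ values produced by $\MNW$ differ by at most a controlled amount, so a non-recipient's deficit is bounded by a single indivisible good.

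With this invariant in hand, the two cases of \Cref{def:EFMzero} follow cleanly. If $\x_j=\mathbf 0$ and $G_j\neq\emptyset$ (Point~1), then $x_j=0$, so $v_i(G_i,x_i)\ge v_j(G_j,0)-1=v_j(G_j)-1$; since $v_j$ is binary on indivisible goods there is a good $g\in G_j$ with $v_i(G_j\setminus\{g\},x_j)\le v_j(G_j\setminus\{g\})\le v_j(G_j)-1\le v_i(G_i,x_i)$, giving the required \EFOne bound. Otherwise (Point~2), either $x_j>0$, in which case the invariant gives $v_i(G_i,x_i)\ge v_j(G_j,x_j)\ge v_i(G_j,x_j)$ directly, or $x_j=0$ and $G_j=\emptyset$, in which case $j$'s bundle is empty and valued at $0\le v_i(G_i,x_i)$ trivially.

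The main obstacle I anticipate is making the water-filling invariant fully rigorous, in particular the claim that every agent who receives a positive divisible share ends with utility no smaller than any other agent's final utility. The subtlety is that an agent can enter and leave $T_1$ multiple times, and I must track that the running minimum utility is monotone nondecreasing while each recipient's utility equals the running minimum at the moment the loop terminates (or was raised past the point where she could be a strict minimizer). I would handle this by a careful loop-invariant argument on the sorted utility vector, showing that each iteration either strictly increases the number of agents sharing the minimum value or exhausts the divisible good, and that the final configuration has the recipients occupying the lowest utility level while non-recipients sit at most one indivisible good above them.
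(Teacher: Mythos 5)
Your proposal contains a genuine gap: the central ``water-filling invariant'' you build the proof on is false. Specifically, the claim that a non-recipient $j$ (one with $x_j=0$) satisfies $v_i(G_i,x_i)\geq v_j(G_j,x_j)-1$ for every other agent $i$ does not hold, because under binary valuations the own-utilities in an $\MNW$/leximin allocation can differ by arbitrarily much --- your assertion that ``two agents' $|G_k|$ values produced by $\MNW$ differ by at most a controlled amount'' is the false step. Counterexample: two agents, agent $1$ values only $g_1$, agent $2$ values $g_1,\dots,g_5$, and the divisible good is worth $u=1$ to both. Then $\MNW$ outputs $G_1=\{g_1\}$, $G_2=\{g_2,\dots,g_5\}$, the water-filling loop gives the entire divisible good to agent $1$, and the mechanism terminates with $v_1(G_1,x_1)=2$, $v_2(G_2,x_2)=4$, $x_2=0$; your invariant demands $v_1(G_1,x_1)\geq v_2(G_2,x_2)-1=3$, which fails (and replacing $5$ by $m$ makes the deficit arbitrarily large). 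The final allocation is nonetheless \EFMzero, because envy of $i$ toward $j$ is governed by $v_i(G_j,x_j)$, not by $v_j(G_j,x_j)$: here $v_1(G_2)=0$. The reduction via Proposition~\ref{prop:nonwasteful} of envy statements to comparisons of own-utilities is only \emph{sufficient} for no-envy (if $v_i(G_i,x_i)\geq v_j(G_j,x_j)$ then $i$ does not envy $j$), never necessary, and your handling of Point~1 of \Cref{def:EFMzero} leans precisely on the direction that is unavailable.

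The part of your argument for recipients ($x_j>0$) is sound: every such agent ends at the common water level, which lower-bounds all final utilities, and Proposition~\ref{prop:nonwasteful} then kills the envy. But for non-recipients you need no quantitative invariant at all: if $x_j=0$, then $G_j$ is exactly agent $j$'s bundle in the initial $\MNW$ allocation, which is \EFOne, and agent $i$'s bundle only grows during water-filling, so $v_i(G_i,x_i)\geq v_i(G_i)\geq v_i(G_j\setminus\{g\})$ for the good $g$ witnessing \EFOne initially. (This also makes the loop-invariant bookkeeping you worry about in your last paragraph unnecessary; incidentally, in Mechanism~\ref{alg:oneidentical} an agent who enters $T_1$ never leaves it, so that monotonicity is not the hard part.) The paper's own proof takes a per-iteration induction maintaining \EFMzero of the partial allocation: for $j\in T_1$, agents $i$ and $j$ gain the same value $u\cdot z$; for $j\notin T_1$, agent $j$'s utility exceeds $v_j(G_i,x_i)$ by at least $\Delta$ while agent $i$ gains at most $u\cdot\frac{\Delta}{u}=\Delta$. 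Either route avoids comparing $v_i(G_i,x_i)$ against $v_j(G_j,x_j)$ for non-recipients, which is exactly where your argument breaks.
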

\begin{proof}
    We will prove by induction that the partial allocation is \EFMzero after each while-loop iteration of the mechanism.
    For the base step, the MNW allocation $(G_1,\ldots,G_n)$ for the indivisible goods is \EFOne by \citet{CaragiannisKuMo19}, and \EFMzero is satisfied since the allocation of the divisible good has not been started.

    For the inductive step, suppose the partial allocation $((G_1,x_1),\ldots,(G_n,x_n))$ is \EFMzero before a while-loop iteration.
    After one while-loop iteration, each agent in $T_1$ receives an extra fraction $z:=\min\{\frac\Delta{u},\frac{y}{|T_1|}\}$ of the divisible good.
    It suffices to show that $j$ does not envy $i$ for any agent $i\in T_1$ and any other agent $j$.
    We discuss two cases: $j\in T_1$ and $j\notin T_1$.

    If $j\in T_1$, then we have $v_j(G_j,x_j)=v_i(G_i,x_i)\geq v_j(G_i,x_i)$ before the execution of this while-loop iteration, where the first equality is due to our definition of $T_1$ in Mechanism~\ref{alg:oneidentical} and the second inequality is due to 2 of Proposition~\ref{prop:nonwasteful}.
    This implies that agent $j$ does not envy agent $i$ before the while-loop iteration.
    Agent $j$ will not envy agent $i$ after the while-loop iteration, as both agents receive the same amount $z$ of the divisible good, which is worth the same value $u\cdot z$ to both agents.

    If $j\notin T_1$, then we have $v_j(G_j,x_j)\geq \Delta+v_i(G_i,x_i)\geq \Delta+v_j(G_i,x_i)$ before the execution of this while-loop iteration, where, again, the first equality is due to our definition of $T_1$ in Mechanism~\ref{alg:oneidentical} and the second inequality is due to Point 2 of Proposition~\ref{prop:nonwasteful}.
    This implies that agent $j$ will not envy agent $i$ after the while-loop iteration if the portion of the divisible goods allocated to agent $i$ is worth at most $\Delta$.
    This is true as $u \cdot z\leq u\cdot\frac\Delta{u}=\Delta$.
\end{proof}

\subsection{Mechanism~\ref{alg:oneidentical} is truthful}
We first define a type of mechanisms called \emph{water-filling mechanisms}.
A water-filling mechanism starts from an allocation $(G_1,\ldots,G_n)$ of the indivisible goods and then proceeds to allocate the unique divisible good by the ``water-filling process'' defined by the while-loop in Mechanism~\ref{alg:oneidentical}.
Our Mechanism~\ref{alg:oneidentical} is a \emph{particular} water-filling mechanism by specifying that the allocation of the indivisible goods $(G_1,\ldots,G_n)$ is output by $\MNW$.

\begin{definition}
    Given a valuation profile $(v_1,\ldots,v_n)$,  an allocation $((G_1,x_1),\ldots,(G_n,x_n))$ \emph{satisfies the water-filling property} if
    \begin{enumerate}
        \item for any two agents $i$ and $j$ with $x_i>0$ and $x_j>0$, we have $v_i(G_i,x_i)=v_j(G_j,x_j)$, and
        \item for any two agents $i$ and $j$ with $x_i=0$ and $x_j>0$, we have $v_i(G_i,x_i)\geq v_j(G_j,x_j)$.
    \end{enumerate}
\end{definition}

It is straightforward to check that the allocation output by any water-filling mechanism satisfies the water-filling property.
Given a valuation profile $\val=(v_1,\ldots,v_n)$ and an allocation $A=((G_1,x_1),\ldots,(G_n,x_n))$ satisfying the water-filling property, we define the \emph{potential} $\phi(A,\val)$ by the ``height of the water level'':
$\phi(A,\val)=v_i(G_i,x_i)$,
where $i$ is an arbitrary agent with $x_i>0$.
When $x_i=0$ for all $i\in[n]$, $\phi(A,\val)=\min_{i\in[n]}v_i(G_i)$.

We will first prove a proposition, Proposition~\ref{prop:waterfillingoptimal}, which follows from that an MNW allocation of indivisible goods with binary valuations is always \emph{Lorenz dominating}~\citep{BabaioffEzFe21}.
Before stating the proposition, we will first state Babaioff \emph{et al.}'s result.

\begin{definition}\label{def:water-filling}
    Given a valuation profile $(v_1,\ldots,v_n)$, an allocation $(A_1,\ldots,A_n)$ is \emph{Lorenz dominating} if, for any $k\in [n]$,   the sum of the $k$ smallest values of $v_1(A_1),\ldots,v_n(A_n)$ is weakly larger than the sum of the $k$ smallest values of $v_1(A_1'),\ldots,v_n(A_n')$ for any allocation $(A_1',\ldots,A_n')$.
\end{definition}

A Lorenz dominating allocation may not exist. However, if it does, it is easy to see that a Lorenz dominating allocation is always leximin.
\cite{BabaioffEzFe21} (implicitly) proved the following theorem.

\begin{proposition}[\citep{BabaioffEzFe21}]\label{thm:Lorenz}
    Consider the allocation of indivisible goods (i.e., $D=\emptyset$). For any binary valuation profile on indivisible goods, an MNW/leximin allocation is Lorenz dominating.
\end{proposition}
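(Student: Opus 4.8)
The plan is to work with a leximin allocation $A$ (which, for binary valuations, coincides with the MNW allocation of \Cref{thm:MNWtruthful}) and to reduce Lorenz domination to a family of ``capped-welfare'' optimality statements, one per threshold. First I would record two reductions. By the free-disposal assumption and an exchange argument identical in spirit to \Cref{prop:nonwasteful}, I may assume $A$ is \emph{non-wasteful}, so that $v_i(A_i)=|A_i|$ for every agent and every good held is valued by its holder. Second, I would invoke the standard discrete conjugacy identity
\[
\textstyle\sum_{i=1}^k a_{(i)} \;=\; \max_{t\in\mathbb{Z}_{\ge 0}}\Big(kt-\sum_{i}(t-a_i)^+\Big),
\]
where $a_{(1)}\le\cdots\le a_{(n)}$ is the ascending sort, to see that $A$ Lorenz-dominates an allocation $B$ (in the sense of \Cref{def:water-filling}) as soon as the \emph{total deficiency below $t$}, namely $D_X(t):=\sum_i (t-v_i(X_i))^+$, satisfies $D_A(t)\le D_B(t)$ for every integer threshold $t\ge 1$. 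Equivalently, it suffices to show that $A$ maximizes the $t$-capped social welfare $\sum_i\min(v_i(\cdot),t)$ over all allocations, simultaneously for every~$t$.

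The heart of the argument is a \emph{transfer} (augmenting-path) analysis of the leximin allocation. On the agent set I would build the directed graph $H$ with an arc $i\to j$ whenever some good $g\in A_i$ has $v_j(g)=1$; a directed path $i_0\to i_1\to\cdots\to i_\ell$ encodes a chain of single-good moves after which $i_0$ loses one unit of value, $i_\ell$ gains one, and every intermediate agent is unchanged. The \textbf{Key Lemma} I would establish is that in a leximin allocation no such path runs ``too far downhill'': if there is a path from $i$ to $j$ then $v_j(A_j)\ge v_i(A_i)-1$. Indeed, were $v_j(A_j)\le v_i(A_i)-2$, performing the chain of moves along a \emph{shortest} such path (which keeps all transferred goods distinct) would realize a Pigou--Dalton/Robin-Hood transfer that strictly leximin-improves $A$, a contradiction. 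Fixing a threshold $t$, I would then let $S=\{i:v_i(A_i)<t\}$ and let $R\supseteq S$ be the set of agents from which $S$ is reachable in $H$. The Key Lemma forces $v_i(A_i)\le t$ for all $i\in R$, while $R$ is closed under predecessors, so no arc enters $R$ from outside; by definition of $H$ this means the agents in $R$ value no good held outside $R$. Combined with non-wastefulness this yields $\bigcup_{i\in R}A_i=\Gamma(R)$, the set of goods valued by some member of $R$, whence the coalition $R$ extracts the maximum value any allocation could give it, $\sum_{i\in R}v_i(A_i)=|\Gamma(R)|$.

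Finishing is then a short accounting. For an arbitrary allocation $B$, the members of $R$ contribute at most $\sum_{i\in R}v_i(B_i)\le |\Gamma(R)|$ to the capped welfare (no coalition can exceed the number of goods it collectively values), while each of the $|N\setminus R|$ remaining agents contributes at most $t$; since every $i\notin R$ has $v_i(A_i)\ge t$ and every $i\in R$ has $v_i(A_i)\le t$, the allocation $A$ attains both bounds exactly, achieving capped welfare $|\Gamma(R)|+t\,|N\setminus R|$. Hence $A$ maximizes $t$-capped welfare for every $t$, i.e.\ $D_A(t)\le D_B(t)$ for all $t$, and Lorenz domination follows from the conjugacy identity applied to both $A$ and $B$. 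The step I expect to be the main obstacle is the Key Lemma: one must choose the path carefully (a shortest path) so that the goods moved are genuinely distinct and the intermediate agents' values really are preserved, and then verify that the resulting transfer is a \emph{strict} leximin improvement; the remaining reachability bookkeeping and the conjugacy reduction are comparatively routine.
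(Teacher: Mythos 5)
Your proposal is correct, but note that the paper itself offers no proof of \Cref{thm:Lorenz}: it imports the statement from \citet{BabaioffEzFe21}, remarking only that they ``(implicitly) proved'' it. Your argument is therefore a genuinely self-contained alternative, and it holds up under scrutiny. The conjugacy identity correctly reduces Lorenz domination (as in \Cref{def:water-filling}) to showing that a leximin allocation $A$ maximizes the capped welfare $\sum_i \min(v_i(\cdot),t)$ simultaneously for every integer $t$. Your Key Lemma is sound: a shortest path in $H$ visits distinct agents, hence moves goods drawn from pairwise-disjoint bundles, so the chained moves change only the endpoints' utilities, and the resulting transfer (with $v_j(A_j)+1\le v_i(A_i)-1$) strictly improves the sorted utility vector lexicographically, contradicting leximin-maximality. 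The accounting over the predecessor-closed reachability set $R$ is also right: the Key Lemma gives $v_i(A_i)\le t$ on $R$ and $v_i(A_i)\ge t$ off $R$, predecessor-closedness plus non-wastefulness gives $\bigcup_{i\in R}A_i=\Gamma(R)$, and hence $A$ attains the universal upper bound $|\Gamma(R)|+t\,|N\setminus R|$ on $t$-capped welfare. One small point of rigor: since the proposition concerns \emph{every} MNW/leximin allocation, you cannot merely ``assume WLOG'' that $A$ is non-wasteful; instead, observe that your own exchange argument shows every leximin allocation is \emph{automatically} non-wasteful under free disposal (moving a good its holder values at $0$ to an agent valuing it at $1$ is a strict leximin improvement), after which nothing is lost. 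Compared with \citet{BabaioffEzFe21}, who work in the greater generality of dichotomous (matroid-rank) valuations and rely on matroid-exchange machinery, your proof is more elementary and specific to additive binary valuations; what it buys is a transparent combinatorial certificate---the coalition $R$ with $\sum_{i\in R}v_i(A_i)=|\Gamma(R)|$---that could be inlined to make the paper self-contained on this point.
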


In particular, Halpern \emph{et al.}'s mechanism $\MNW$ always outputs Lorenz dominating allocations.
We then get the following proposition from the Lorenz domination.

\begin{proposition}\label{prop:waterfillingoptimal}
    Fix a valuation profile $(v_1,\ldots,v_n)$. Mechanism~\ref{alg:oneidentical} outputs an allocation $A$ that maximizes the potential $\phi(A,\val)$ among all allocation satisfying the water-filling property.
\end{proposition}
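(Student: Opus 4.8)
The plan is to collapse the optimization over all water-filling allocations into a one-dimensional comparison of ``water levels,'' and then to show that starting the filling from a Lorenz-dominating indivisible base is what makes this level largest.

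First I would record two reformulations. For any allocation $A$ satisfying the water-filling property, the potential $\phi(A,\val)$ equals the minimum utility $\min_i v_i(G_i,x_i)$: the agents receiving a positive fraction of the divisible good all sit at the common water level, and Point~2 of the property forces everyone else weakly above it. Next, writing $u$ for the common value of the whole divisible good and $b_1\le\cdots\le b_n$ for the ascending-sorted utilities $v_i(G_i)$ of the underlying \emph{indivisible} allocation, I would show that the water level is exactly the $h$ solving $f_b(h)=u$, where $f_b(h):=\sum_{l=1}^n\max\{0,h-b_l\}$. Indeed, an agent can end strictly below $h$ only by violating the property, so every agent with base utility below $h$ is filled up to $h$ while every agent with base utility at least $h$ keeps $x_i=0$; summing the value spent gives $\sum_l(h-b_l)_+=u$. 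Thus, as far as its potential is concerned, each water-filling allocation is pinned down by its indivisible base $b$ through the increasing function $f_b$, and maximizing $\phi$ amounts to maximizing $f_b^{-1}(u)$ over the choice of base.

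The core claim is therefore: if a base $s$ Lorenz-dominates a base $r$ (that is, $S_k:=\sum_{l\le k}s_l\ge\sum_{l\le k}r_l=:R_k$ for every $k$), then $f_s^{-1}(u)\ge f_r^{-1}(u)$. Since $f_s,f_r$ are continuous and increasing, it suffices to prove the pointwise inequality $f_s(h)\le f_r(h)$ for all $h\ge 0$. Rewriting $\max\{0,h-b_l\}=h-\min\{h,b_l\}$ turns this into the equivalent statement $\sum_l\min\{h,s_l\}\ge\sum_l\min\{h,r_l\}$, i.e.\ the concave increasing ``capped sum'' is larger for the Lorenz-dominating base. The clean way I would establish this is through the identity
\[
\sum_{l=1}^n\min\{h,b_l\}=\min_{0\le q\le n}\bigl(B_q+(n-q)\,h\bigr),
\]
valid for any ascending $b$ with prefix sums $B_q$ (the increment $b_q-h$ of the right-hand side changes sign exactly once, so the minimizer $q^\star$ is the number of indices with $b_l\le h$, at which point the right-hand side equals the capped sum). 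Taking $q^\star$ to be the minimizer for $s$ and invoking Lorenz domination,
\[
\sum_l\min\{h,s_l\}=S_{q^\star}+(n-q^\star)h\ \ge\ R_{q^\star}+(n-q^\star)h\ \ge\ \min_{0\le q\le n}\bigl(R_q+(n-q)h\bigr)=\sum_l\min\{h,r_l\},
\]
which is exactly what is needed. This capped-sum comparison via the $\min$-formula is the only genuinely nontrivial step, and I expect it to be the main obstacle; the surrounding monotonicity manipulations are routine.

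Finally I would assemble the pieces. By Proposition~\ref{thm:Lorenz}, the MNW/leximin base computed in Line~1 of Mechanism~\ref{alg:oneidentical} Lorenz-dominates the indivisible base of every competing allocation, so its water level $f_s^{-1}(u)$ is at least that of any water-filling allocation $A^{\ast}$, yielding $\phi(A,\val)\ge\phi(A^{\ast},\val)$. The degenerate case $u=0$ needs one extra line: there both potentials collapse to the smallest base utility, and $S_1\ge R_1$ already gives the conclusion; this also dispatches the mild technical point that $f_b^{-1}$ is only weakly increasing in the flat region below $s_1$.
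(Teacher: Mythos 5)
Your proposal is correct. It rests on the same two pillars as the paper's own proof---the fact that the water level is pinned down by the base utilities of the indivisible allocation, and Lorenz domination of the MNW/leximin base (Proposition~\ref{thm:Lorenz})---but the execution is genuinely different. The paper argues by contradiction: assuming a competing water-filling allocation has level $h' > h$, it splits into the two cases $\ell' \ge \ell$ and $\ell' < \ell$ (comparing the numbers of agents receiving a positive fraction of the divisible good), and in both cases derives $\sum_{i=1}^{\ell} v_{b_i}(G_{b_i}') > \sum_{i=1}^{\ell} v_{a_i}(G_{a_i})$, contradicting Lorenz domination of the prefix sums at the single index $\ell$. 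You instead make the dependence of the level on the base fully explicit, $h = f_b^{-1}(u)$ with $f_b(h) = \sum_l \max\{0, h-b_l\}$, and prove the general monotonicity fact that Lorenz domination of bases implies $f_s \le f_r$ pointwise, via the identity $\sum_l \min\{h,b_l\} = \min_{0\le q\le n}\bigl(B_q + (n-q)h\bigr)$; this is essentially the classical equivalence between Lorenz/majorization domination and domination of separable concave (capped) sums. What your route buys: a uniform, caseless argument (the minimization over $q$ automatically absorbs the paper's case distinction on where the watered-agent counts fall), and a reusable lemma giving domination at \emph{every} level $h$ rather than only at the final one---which is close in spirit to what the paper must prove again later when it shows the full output allocation is Lorenz dominating. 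What the paper's route buys: it is shorter and more elementary, needing only one prefix-sum comparison at one index, with none of the machinery of $f_b$, its inverse, and the min-identity. Your care with the boundary cases ($u=0$ and the flat region of $f_b$) is fine but largely moot, since free disposal forces $u>0$ in this setting.
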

\iftrue
\begin{proof}
    Let $A=((G_1,x_1),\ldots,(G_n,x_n))$ be the allocation output by Mechanism 1 and $A'=((G_1',x_1'),\ldots,(G_n',x_n'))$ be an arbitrary allocation satisfying the water-filling property.
    Let $a_1,\ldots,a_n$ be an ordering of the $n$ agents such that $v_{a_1}(G_{a_1})\leq \cdots\leq v_{a_n}(G_{a_n})$, and $b_1,\ldots,b_n$ be an ordering of the $n$ agents such that $v_{b_1}(G_{b_1}')\leq \cdots\leq v_{b_n}(G_{b_n}')$.
    Since $(G_1,\ldots,G_n)$ is Lorenz dominating (Proposition 4.6), for any $k\in[n]$, we have $\sum_{i=1}^kv_{a_i}(G_{a_i})\geq\sum_{i=1}^kv_{b_i}(G_{b_i}')$.

    Let $h=\phi(A,\mathbf{v})$ and $h'=\phi(A',\mathbf{v})$.
    We also let $\ell$ be the number of agents with $x_i>0$ and $\ell'$ be the number of agents with $x_i'>0$.

    By the water-filling property, for each $i=1,\ldots,\ell$, we have $v_{a_i}(G_{a_i})+u\cdot x_{a_i}=h$, and $\ell\cdot h=u+\sum_{i=1}^\ell v_{a_i}(G_{a_i})$.
    Similarly, for each $i=1,\ldots,\ell'$, we have $v_{b_i}(G_{b_i}')+u\cdot x_{b_i}'=h'$, and $\ell'\cdot h'=u+\sum_{i=1}^{\ell'} v_{b_i}(G_{b_i}')$.

    Suppose for the sake of contradiction that $h'>h$. We discuss two cases: $\ell'\geq\ell$ and $\ell'<\ell$.

    If $\ell'\geq\ell$, we have $\ell\cdot h'=u\cdot\sum_{i=1}^{\ell}x_{b_i}'+\sum_{i=1}^{\ell} v_{b_i}(G_{b_i}')\leq u+\sum_{i=1}^{\ell} v_{b_i}(G_{b_i}')$.
    On the other hand, $\ell\cdot h'>\ell\cdot h=u+\sum_{i=1}^\ell v_{a_i}(G_{a_i})$.
    Combining two inequalities yields $\sum_{i=1}^{\ell} v_{b_i}(G_{b_i}')>\sum_{i=1}^\ell v_{a_i}(G_{a_i})$, which contradicts to Lorenz domination.

    If $\ell'<\ell$, for each $i=\ell'+1,\ldots,\ell$, we have $v_{b_i}(G_{b_i}')\geq h'$.
    Therefore, $\ell\cdot h'\leq\ell'\cdot h'+\sum_{i=\ell'+1}^{\ell} v_{b_i}(G_{b_i}')= u+\sum_{i=1}^{\ell} v_{b_i}(G_{b_i}')$.
    On the other hand, $\ell\cdot h'>\ell\cdot h=u+\sum_{i=1}^\ell v_{a_i}(G_{a_i})$.
    Combining two inequalities yields $\sum_{i=1}^{\ell} v_{b_i}(G_{b_i}')>\sum_{i=1}^\ell v_{a_i}(G_{a_i})$, which again contradicts to Lorenz domination.
\end{proof}
\fi

\begin{lemma}
    Mechanism~\ref{alg:oneidentical} is truthful.
\end{lemma}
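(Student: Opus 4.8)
The plan is to prove truthfulness by fixing the reports of the other $n-1$ agents and comparing agent~$i$'s true utility when she reports her true $v_i$ against her utility under an arbitrary misreport $v_i'$; I treat the common divisible value $u>0$ as publicly known, so the only strategic dimension is the (binary) report on the indivisible goods. The first step is a closed form for the truthful utility. For an indivisible-value profile $c=(c_1,\dots,c_n)$ let $w(c)$ denote the resulting water level, i.e.\ the unique number with $\sum_{j=1}^n\max\{w(c)-c_j,0\}=u$; this $w(c)$ equals the potential $\phi$ of the allocation obtained by water-filling any indivisible allocation whose value profile is $c$. Using Proposition~\ref{prop:nonwasteful} (so $v_i(G_i)=|G_i|=:c_i$) together with the water-filling property, I would argue that under truthful reporting agent~$i$ ends at value exactly $\max\{c_i,\,h\}$, where $h:=w(c)=\phi$ is the final water level: if $x_i>0$ she sits at the level $h$, and if $x_i=0$ the water-filling property forces $c_i\ge h$.

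Next I would set up the misreport. Let $(G_1',x_1'),\dots,(G_n',x_n')$ be the output when $v_i$ is replaced by $v_i'$, let $L_i=\{g:v_i(g)=1\}$ be $i$'s true liked set, and let $p:=|G_i'\setminus L_i|\ge 0$ count the goods $i$ receives that she truly values at $0$ (the ``over-reported'' goods). By Proposition~\ref{prop:nonwasteful} applied to the reported profile, every other agent $j$ receives only truly-valued goods, so her reported value $\hat c_j:=|G_j'|$ equals her true value $\tilde c_j:=v_j(G_j')$; for agent~$i$ we instead have $\tilde c_i=v_i(G_i')=\hat c_i-p$. The misreport water level is $h'=w(\hat c)$. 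I then split on $x_i'$. If $x_i'=0$, then $i$'s true utility is $v_i(G_i')=\tilde c_i\le v_i(G_i)=c_i\le\max\{c_i,h\}$, where the middle inequality is precisely the truthfulness of $\MNW$ on the indivisible goods (Theorem~\ref{thm:MNWtruthful}); this case needs nothing further.

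The interesting case is $x_i'>0$. Here $i$ sits at the reported water level, so $\hat c_i+u\,x_i'=h'$, whence her \emph{true} utility is $v_i(G_i')+u\,x_i'=\tilde c_i+(h'-\hat c_i)=h'-p=w(\hat c)-p$. I would close the argument with the chain $w(\hat c)-p\le w(\tilde c)\le h$. The first inequality holds because $\hat c$ and $\tilde c$ differ only in coordinate~$i$, by exactly $p$, and $w(\cdot)$ is $1$-Lipschitz there: evaluating $\sum_j\max\{w(\tilde c)+p-\hat c_j,0\}$ at level $w(\tilde c)+p$ on the landscape $\hat c$ yields at least $u$ (the $j=i$ term is unchanged and every other term weakly grows), forcing $w(\hat c)\le w(\tilde c)+p$. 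The second inequality is the crux and is exactly where Proposition~\ref{prop:waterfillingoptimal} enters: water-filling the indivisible allocation $(G_1',\dots,G_n')$ under the \emph{true} profile produces an allocation satisfying the water-filling property whose potential is $w(\tilde c)$, so the potential-maximality of Mechanism~\ref{alg:oneidentical} on the true profile gives $w(\tilde c)\le\phi=h$. Combining, $i$'s true utility under any misreport is at most $\max\{c_i,h\}$, matching her truthful utility, which proves truthfulness.

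I expect the main obstacle to be the $x_i'>0$ case, and within it the two quantitative facts that make the ``over-reporting gain'' $p$ harmless: (i)~the $1$-Lipschitz bound on $w(\cdot)$, which guarantees that inflating a claimed indivisible value by $p$ raises the water level by at most $p$; and (ii)~the correct invocation of Proposition~\ref{prop:waterfillingoptimal} across the mismatch between the reported profile (which governs how the mechanism actually pours the divisible good, via $\hat c$) and the true profile (against which potential-optimality is measured, via $\tilde c$). A secondary subtlety I would address is that $(G_1',\dots,G_n')$ need not be a full partition of $G$ — an under-reported good valued by no other agent is discarded — so I would note that the value profile of a partial allocation is still dominated (equivalently that $w(\cdot)$ is monotone), so the inequality $w(\tilde c)\le h$ survives.
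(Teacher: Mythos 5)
Your proposal is correct and follows essentially the same route as the paper's proof: the same case split on whether the misreporting agent receives any of the divisible good, the same key quantity $p$ (the paper's $\delta$), and the same two inequalities---your $1$-Lipschitz bound $w(\hat c)\le w(\tilde c)+p$ is a formalization of the paper's verbal ``extra $\delta$ of water'' argument, while $w(\tilde c)\le h$ is exactly the paper's invocation of Proposition~\ref{prop:waterfillingoptimal} via the auxiliary allocation $A^\dag$ obtained by water-filling $(G_1',\ldots,G_n')$ under the true profile. Your explicit closed form for the water level and your remark about goods discarded under free disposal make the argument slightly more rigorous than the paper's, but the underlying approach is identical.
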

\begin{proof}
    Let $\M$ be Mechanism~\ref{alg:oneidentical}.
    Consider a valuation profile $\val=(v_1,\ldots,v_n)$ and suppose agent $1$ misreports her valuation to $v_1'$.
    Let $\val'$ be the valuation profile $(v_1',v_2,\ldots,v_n)$.
    Let $A=((G_1,x_1),\ldots,(G_n,x_n))$ be the allocation output by $\M$ when agent $1$ reports truthfully, and let $A'=((G_1',x_1'),\ldots,(G_n',x_n'))$ be the allocation output by $\M$ when agent $1$ reports $v_1'$.
    By the truthfulness of $\MNW$, we have $v_1(G_1)\geq v_1(G_1')$.
    Let $A^\dag=((G_1',x_1^\dag),\ldots,(G_n',x_n^\dag))$ be the allocation obtained by applying the water-filling process (the while-loop in Mechanism~\ref{alg:oneidentical})  to the start-up allocation $(G_1',\ldots,G_n')$ with the true valuation profile $\val$ considered.
    Both $A$ and $A^\dag$ satisfy the water-filling property with respect to $\val$, and $A'$ satsifies the water-filling property with respect to $\val'$.

    If $x_1'=0$, we have $v_1(G_1,x_1)\geq v_1(G_1)\geq v_1(G_1')=v_1(G_1',x_1')$, and the truthfulness of $\M$ holds trivially.
    Thus, we assume $x_1'>0$ from now on.

    By Proposition~\ref{prop:waterfillingoptimal}, $\phi(A,\val)\geq\phi(A^\dag,\val)$.
    We have $v_1(G_1,x_1)\geq \phi(A,\val)$ according to the definition of the potential function $\phi$.
    To conclude that $\M$ is truthful, it suffices to show that $\phi(A^\dag,\val)\geq v_1(G_1',x_1')$.

    Since we have assumed $x_1'>0$, $v_1'(G_1',x_1')=\phi(A',\val')$.
    Let $\delta = v_1'(G_1',x_1')-v_1(G_1',x_1')$.
    It then remains to show that $\phi(A^\dag,\val)+\delta\geq \phi(A',\val')$.

    By Point 1 of Proposition~\ref{prop:nonwasteful}, we have $v_1'(g)=1$ for each $g\in G_1'$.
    Moreover, it is clear that $\delta$ equals to the number of goods $g$ in $G_1'$ with $v_1(g)=0$ and $v_1'(g)=1$.
    In particular, $\delta\geq 0$.
    Now, consider the two water-filling processes corresponding to $\phi(A^\dag,\val)$ and $\phi(A',\val')$.
    When the ``height of the water level'' reaches $\phi(A^\dag,\val)$, the first process terminates, while an additional amount $\delta$ of water is yet to be split among one or more agents in the second process.
    Thus, the ``height of the water level'' for the second process can be further increased by at most $\delta$.
    Therefore, $\phi(A^\dag,\val)+\delta\geq \phi(A',\val')$.
\end{proof}

\subsection{Leximin and MNW}

We show that the allocation output by Mechanism~\ref{alg:oneidentical} is Lorenz dominating.
\begin{proposition}
    Given a valuation profile $\val=(v_1,\ldots,v_n)$, the allocation $A=((G_1,x_1),\ldots,(G_n,x_n))$ output by Mechanism~\ref{alg:oneidentical} is Lorenz dominating.
\end{proposition}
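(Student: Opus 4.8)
The plan is to establish the defining prefix-sum inequalities of Lorenz domination (Definition~\ref{def:water-filling}) by comparing the output $A=((G_1,x_1),\ldots,(G_n,x_n))$ against an arbitrary allocation $A'=((G_1',x_1'),\ldots,(G_n',x_n'))$ one prefix at a time. Write $u$ for the common value of the divisible good, let $p_1\le\cdots\le p_n$ be the sorted indivisible values $v_i(G_i)$ of $A$ (these equal $|G_i|$ by Point~1 of Proposition~\ref{prop:nonwasteful}), and let $q_1\le\cdots\le q_n$ be the sorted indivisible values $v_i(G_i')$ of $A'$. Set $P_k=\sum_{i\le k}p_i$ and $Q_k=\sum_{i\le k}q_i$. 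The single external fact I would invoke is that the MNW/leximin allocation $(G_1,\ldots,G_n)$ is Lorenz dominating among allocations of the indivisible goods (Proposition~\ref{thm:Lorenz}), which gives $P_k\ge Q_k$ for every $k$.

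First I would record the exact shape of the output's sorted utility vector. Because $A$ satisfies the water-filling property together with Point~1 of Proposition~\ref{prop:nonwasteful}, the agents receiving a positive share of the divisible good are exactly the $\ell$ agents of lowest indivisible value, all raised to the common water level $h$, while the remaining agents keep their indivisible values $p_{\ell+1}\le\cdots\le p_n$, each at least $h$. Since these $\ell$ agents absorb the entire budget $u$, we have $\ell h=P_\ell+u$, that is, $h=(P_\ell+u)/\ell$. Consequently the sum of the $k$ smallest utilities of $A$ equals $kh$ when $k\le\ell$ and equals $P_k+u$ when $k\ge\ell$.

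Next I would upper-bound the sum $R_k$ of the $k$ smallest utilities of $A'$. The key observation is that, for any $j$, testing the particular set of the $j$ agents with smallest indivisible value in $A'$ gives $R_j\le Q_j+u\sum(\text{their divisible shares})\le Q_j+u$, since the sum of the $j$ smallest utilities is the minimum over all size-$j$ subsets. With this, the case $k\ge\ell$ is immediate: $R_k\le Q_k+u\le P_k+u$, which is exactly the output's prefix sum. For the case $k<\ell$, where the output's prefix sum is the flat value $kh$, I would use the elementary averaging inequality $R_k\le\frac{k}{\ell}R_\ell$ (the $k$ smallest values have average at most that of the $\ell$ smallest), and then bound $R_\ell\le Q_\ell+u\le P_\ell+u=\ell h$, yielding $R_k\le kh$. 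In both cases the prefix sum of $A$ weakly exceeds that of $A'$, which is precisely Lorenz domination.

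I expect the main obstacle to be the regime $k<\ell$, where $A$'s poorest $k$ agents all sit at the common level $h$ while $A'$ may spread the divisible good differently; a naive term-by-term comparison fails there because individual entries of $A'$ can exceed $h$. The averaging step combined with the test-set bound at index $\ell$ (rather than at $k$) is what resolves this, and it is essential that the threshold used for averaging is exactly the number of agents at the water level, so that $\ell h=P_\ell+u$ holds with equality. A minor point to dispatch is the degenerate case $u=0$ (the divisible good is worthless to everyone and is discarded under free disposal), where the claim reduces directly to Proposition~\ref{thm:Lorenz}.
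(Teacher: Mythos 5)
Your proof is correct, but it takes a genuinely different route from the paper's. The paper argues by contradiction on the smallest violating prefix index~$k$: it first normalizes the competing allocation $A'$, without loss of generality, to one satisfying the water-filling property (a reduction that itself requires a separate lemma, proved right after the proposition, showing that water-filling on a fixed indivisible allocation Lorenz dominates any other way of pouring the divisible good), then invokes Proposition~\ref{prop:waterfillingoptimal} to get $\phi(A,\val)\geq\phi(A',\val)$, and finally does a case analysis ($\ell\le k$ versus $\ell>k$, after showing $k>\ell'$) to contradict Proposition~\ref{thm:Lorenz}. You instead compare against an \emph{arbitrary} $A'$ directly, with no normalization and no use of the potential function: your two elementary tools are the test-set bound $R_j\le Q_j+u$ (the sum of the $j$ smallest utilities of $A'$ is at most the utility sum over the $j$ agents of smallest indivisible value, who jointly hold at most the whole divisible good) and the monotonicity of prefix averages of a sorted sequence, $R_k\le\frac{k}{\ell}R_\ell$, combined with the explicit formulas $kh$ (for $k\le\ell$) and $P_k+u$ (for $k\ge\ell$) for the output's prefix sums. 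Both proofs rest on the same external fact, Proposition~\ref{thm:Lorenz}, but your argument is more self-contained: it makes the paper's auxiliary water-filling-normalization lemma unnecessary (your test-set argument never assumes $A'$ shares the indivisible allocation or any structure at all), and it yields sharper quantitative information, namely exact expressions for the Lorenz curve of the output. What the paper's route buys in exchange is reuse: Proposition~\ref{prop:waterfillingoptimal} was already established for the truthfulness proof, so conditional on that machinery its proof is shorter. Two small points you should make explicit in a polished write-up: $\ell\ge1$ whenever $u>0$ (so the division by $\ell$ is legitimate), and the identification of the watered agents as exactly the $\ell$ agents of smallest indivisible value follows from the water-filling property (an unwatered agent with indivisible value below the water level would violate its second condition); your handling of the degenerate case $u=0$ via free disposal is the right dispatch.
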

\begin{proof}
    Suppose there is an allocation $A'=((G_1',x_1'),\ldots,(G_n',x_n'))$ such that $A$ does not Lorenz dominate $A'$.
    Let $a_1,\ldots,a_n$ be an ordering of the $n$ agents such that $v_{a_1}(G_{a_1},x_{a_1})\leq \cdots\leq v_{a_n}(G_{a_n},x_{a_n})$, and $b_1,\ldots,b_n$ be an ordering of the $n$ agents such that $v_{b_1}(G_{b_1}',x_{b_1}')\leq \cdots\leq v_{b_n}(G_{b_n}',x_{b_n}')$.
    Let $k$ be the smallest index such that
    \begin{equation}\label{eqn:lorenz}
        \sum_{i=1}^kv_{a_i}(G_{a_i},x_{a_i})<\sum_{i=1}^kv_{b_i}(G_{b_i}',x_{b_i}').
    \end{equation}

    We can first assume without loss of generality that $A'$ satisfies the water-filling property.
    If not, we can adjust $A'$ by applying the water-filling process to the start up allocation $(G_1',\ldots,G_n')$.
    It is easy to see that the adjusted allocation Lorenz dominates the original allocation.

    By Proposition~\ref{prop:waterfillingoptimal}, we have $\phi(A,\val)\geq\phi(A',\val)$.
    Let $\ell$ be the number of agents with $x_i>0$ and $\ell'$ be the number of agnets with $x_i'>0$.
    We must have $k>\ell'$.
    Otherwise, $\sum_{i=1}^kv_{b_i}(G_{b_i}',x_{b_i}')=k\cdot\phi(A',\val)\leq k\cdot\phi(A,\val)$.
    Since $\phi(A,\val)$ is a lower bound to each $v_i(G_i,x_i)$, Equation (\ref{eqn:lorenz}) cannot be true.
    Since $k>\ell'$, we have
    \begin{equation}\label{eqn:lorenzb}
        \sum_{i=1}^kv_{b_i}(G_{b_i}',x_{b_i}')=u+\sum_{i=1}^kv_{b_i}(G_{b_i}').
    \end{equation}

    Now we consider two cases: $\ell\leq k$ and $\ell>k$.
    If $\ell\leq k$,  we must have $\sum_{i=1}^kv_{a_i}(G_{a_i},x_{a_i})=u+\sum_{i=1}^kv_{a_i}(G_{a_i})$.
    This, together with Equations (\ref{eqn:lorenz}) and (\ref{eqn:lorenzb}), implies $\sum_{i=1}^kv_{a_i}(G_{a_i})<\sum_{i=1}^kv_{b_i}(G_{b_i}')$. However, this is a contradiction to Proposition~\ref{thm:Lorenz}.

    If $\ell>k$, we have $v_{a_i}(G_{a_i},x_{a_i})=\phi(A,\val)=v_{a_k}(G_{a_k},x_{a_k})<v_{b_k}(G_{b_k}',x_{b_k}')\leq v_{b_i}(G_{b_i}',x_{b_i'})$ for each $i=k+1,\ldots,\ell$, where the only strict inequality in the middle is due to our assumption that $k$ is the smallest index.
    This implies Equation (\ref{eqn:lorenz}) continues to hold if we summing over the first $\ell$ terms instead of the first $k$ terms:
    \begin{equation}\label{eqn:lorenzl}
        \sum_{i=1}^\ell v_{a_i}(G_{a_i},x_{a_i})<\sum_{i=1}^\ell v_{b_i}(G_{b_i}',x_{b_i}').
    \end{equation}
    Since $\ell>k>\ell'$, we have $\sum_{i=1}^\ell v_{a_i}(G_{a_i},x_{a_i})=u+\sum_{i=1}^\ell v_{a_i}(G_{a_i})$ and $\sum_{i=1}^\ell v_{b_i}(G_{b_i}',x_{b_i}')=u+\sum_{i=1}^\ell v_{b_i}(G_{b_i}')$.
    These, together with Equation (\ref{eqn:lorenzl}), give a contradiction to Proposition~\ref{thm:Lorenz} again.
\end{proof}

\iftrue
In the previous proof, we assume $A'$ satisfying the water-filling property w.l.o.g., which is due to the following lemma.
\begin{lemma}
    For any allocation $(G_1,\ldots,G_n)$ on only indivisible goods, if we need to further allocate the single divisible good that values the same to all agents, the full allocation $A$ which applies the water-filling process Lorenz dominates any other full allocation $B$.
\end{lemma}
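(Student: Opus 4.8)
The plan is to strip the lemma down to a one-dimensional water-filling statement and then prove Lorenz domination through a threshold (dual) reformulation of sorted partial sums. Both $A$ and the competitor $B$ extend the \emph{same} indivisible partition $(G_1,\ldots,G_n)$ and differ only in how the single divisible good is split, so I would set $c_i := v_i(G_i)$, a quantity fixed across $A$ and $B$. Writing $u$ for the common value every agent places on the divisible good, any feasible extension with divisible shares $x_1,\ldots,x_n$ (where $x_i \ge 0$ and $\sum_i x_i = 1$) gives agent $i$ total utility $c_i + u x_i$. First I would record the closed form of the water-filling output: there is a level $h = \phi(A,\val)$ such that $u_i^A := v_i(G_i,x_i^A) = \max(c_i,h)$ for every $i$, where $h$ is the unique solution of $\sum_i (h-c_i)^+ = u$ (agents who receive divisible good are raised to $h \ge c_i$, the rest stay at $c_i \ge h$). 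The competitor satisfies $u_i^B = c_i + u x_i^B \ge c_i$ with $\sum_i (u_i^B - c_i) = u$. Note that binary valuations on the indivisible goods play no role here; only the identical valuation of the divisible good matters, which is why the lemma holds for an arbitrary partition.

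The key tool is the dual representation of the sum of the $k$ smallest entries of a vector: for any $(u_1,\ldots,u_n)$ with ascending sort $u_{(1)} \le \cdots \le u_{(n)}$, one has $\sum_{i=1}^k u_{(i)} = \max_{t \in \mathbb{R}}\bigl[\,kt - \sum_{i=1}^n (t-u_i)^+\,\bigr]$, where $(\cdot)^+ := \max(\cdot,0)$ (the maximum is attained at $t = u_{(k)}$). Given this, to prove that $A$ Lorenz-dominates $B$, i.e.\ $\sum_{i=1}^k u_{(i)}^A \ge \sum_{i=1}^k u_{(i)}^B$ for every $k$, it suffices to establish the single pointwise inequality $\sum_i (t-u_i^A)^+ \le \sum_i (t-u_i^B)^+$ for every threshold $t$: evaluating $A$'s maximum at $B$'s maximizer $t^\ast$ then yields $\sum_{i=1}^k u_{(i)}^A \ge k t^\ast - \sum_i (t^\ast - u_i^A)^+ \ge k t^\ast - \sum_i (t^\ast - u_i^B)^+ = \sum_{i=1}^k u_{(i)}^B$ for every $k$ at once.

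It remains to prove the deficit inequality, which is the heart of the argument. Reading $\sum_i (t-u_i)^+$ as the total ``deficit below level $t$'', I would argue that over all feasible extensions the minimum possible deficit below $t$ equals $\max\bigl(\sum_i (t-c_i)^+ - u,\,0\bigr)$: each unit of divisible budget lowers the deficit by at most one, and only when spent on an agent still below $t$, so the best one can do is cancel the budget $u$ against the available room $\sum_i (t-c_i)^+$ below $t$. I would then verify that water-filling attains this minimum for every $t$ by a short case split. If $t \le h$, then $u_i^A = \max(c_i,h) \ge t$ for all $i$, so the water-filling deficit is $0$, which matches the minimum because $\sum_i (t-c_i)^+ \le \sum_i (h-c_i)^+ = u$. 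If $t > h$, splitting the agents into those with $c_i \ge h$ (each contributing $(t-c_i)^+$) and those with $c_i < h$ (each contributing $t-h$) and using $\sum_{i:\,c_i<h}(h-c_i) = u$ shows the water-filling deficit equals $\sum_i (t-c_i)^+ - u$, again matching the minimum. Since every feasible $B$ has deficit at least this minimum, the pointwise inequality $\sum_i (t-u_i^A)^+ \le \sum_i (t-u_i^B)^+$ holds for all $t$, and Lorenz domination follows.

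The main obstacle is precisely this last paragraph: the crucial idea is to compare the two allocations through the deficit below a sliding threshold rather than to wrestle directly with the sorted partial sums for each $k$, and to recognize that water-filling minimizes this deficit \emph{uniformly} in $t$. Once that reformulation is in place the case analysis is routine; I would only need some care with ties and with the boundary agents having $c_i = h$, neither of which affects the identities above.
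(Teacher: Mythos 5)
Your proof is correct, but it takes a genuinely different route from the paper's. The paper first exploits the fact that every full allocation extending $(G_1,\ldots,G_n)$ has the same total value $\sum_i v_i(G_i)+u$ (the single divisible good is worth $u$ to everyone), which converts Lorenz domination into the equivalent statement that, for each $k$, the sum of the $k$ \emph{largest} utilities under $A$ is weakly smaller than under $B$; it then argues this by a case analysis on how many of the top-$k$ entries of $A$ received water, ending with a rather terse direct comparison (``the sum of the former is obviously weakly smaller than the latter''). You instead stay with the $k$-smallest formulation and invoke the threshold (dual) identity $\sum_{i=1}^k u_{(i)} = \max_t \bigl[\, kt - \sum_i \max(t-u_i,0) \,\bigr]$, which reduces Lorenz domination to a single inequality, uniform in the threshold $t$: the deficit of $A$ below level $t$ is at most that of $B$. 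You then verify that water-filling attains the universal lower bound $\max\bigl(\sum_i \max(t-c_i,0)-u,\,0\bigr)$ on this deficit, using the closed form $u_i^A=\max(c_i,h)$ with $\sum_i \max(h-c_i,0)=u$. Your route buys three things: the tie and boundary cases ($c_i=h$) are handled automatically by the closed form; the argument never needs equal totals, so it even covers competitors $B$ that waste part of the divisible good (where the paper's largest-sum reformulation would break); and the paper's hand-waved final comparison is replaced by two short, mechanically checkable inequalities, namely $\max(t-c_i-ux_i,0)\ge \max(t-c_i,0)-ux_i$ and the two-case evaluation of $A$'s deficit. The paper's route is more elementary---no duality, just sorting and counting---but as written its last step is precisely the one that needs the care your deficit computation supplies.
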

\begin{proof}
    Since $A$ and $B$ are based on the same allocation of the indivisible goods and all agents have identical valuations on the single divisible good (which can be viewed as water), $A$ Lorenz dominates $B$ if and only if for any $k\in[n]$, the sum of the $k$ largest values of $v_1(A_1),\ldots,v_n(A_n)$ is weakly smaller than the sum of the $k$ largest value of $v_1(B_1),\ldots,v_n(B_n)$.

    Fix a $k\in[n]$. If no water (divisible good) is filled to the $k$ largest values in $A$, then the sum of the $k$ largest values of $v_1(A_1),\ldots,v_n(A_n)$ is weakly smaller than the sum of the $k$ largest value of $v_1(B_1),\ldots,v_n(B_n)$, because they are from the same allocation of the indivisible goods.
    If only the largest $k'<k$ values among the $k$ largest values in $A$ are not filled with some water, the one filled with water is equal to the smallest value in $A$, which is equal to the average of sum of the $n-k'$ smallest values of $v_1(A_1),\ldots,v_n(A_n)$ and the whole divisible good over $n-k'$ agents.
    Then, we can compare the values filled with the water in $A$ ($k-k'$ values in total) with the sum of the following two terms in $B$: The $k-k'$ smallest values among the $k$ largest values, and the extra amount in the largest $k'$ values beyond the $k'$ largest values of $v_1(G_1),\ldots,v_n(G_n)$.
    The sum of the former is obviously weakly smaller than the latter.
    Thus, we prove the statement.
    \end{proof}
\fi

Lorenz domination leads to the following lemma.

\begin{lemma}
Mechanism~\ref{alg:oneidentical}  always outputs allocations that are both leximin and MNW.
\end{lemma}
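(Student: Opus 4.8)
The plan is to read both properties off the Lorenz domination of the output allocation $A$, which was just established in the preceding proposition. Throughout I compare the realized utility vector $(v_1(G_1,x_1),\dots,v_n(G_n,x_n))$ of $A$, sorted ascending, against the corresponding vector of an arbitrary competitor $A'$. Note first that by Point~1 of \Cref{prop:nonwasteful} every realized utility of $A$ equals $|G_i|+u\cdot x_i$, so the partition of agents into those with value $0$ and those with positive value is clean; the same normalization applies to any competitor after discarding goods given to agents who value them at $0$.

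The leximin claim is immediate. As noted after \Cref{def:water-filling}, Lorenz domination means every ascending prefix sum of $A$ is at least that of any $A'$. Hence $A$'s sorted utility vector can never be the first to fall strictly below that of any $A'$: if it were leximin-dominated, there would be an index where the sorted vectors agree on the earlier entries and $A'$ is strictly larger, which would make $A'$'s prefix sum at that index strictly exceed $A$'s, a contradiction. Thus $A$ is maximal in the leximin order of \Cref{def:leximin}.

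For MNW (\Cref{def:MNW}) I would argue in two steps. First, the count: let $p$ be the number of agents with $v_i(G_i,x_i)>0$ under $A$, so the $n-p$ smallest sorted utilities of $A$ sum to $0$. Applying Lorenz domination with $k=n-p$, the $k$ smallest utilities of any $A'$ sum to at most $0$, hence (utilities being nonnegative) to exactly $0$; so $A'$ has at most $p$ positively-valued agents. Thus $A$ maximizes $|\{i:v_i(A_i)>0\}|$, and any $A'$ with a strictly smaller count is already beaten in the MNW order. Second, the product among count-optimal competitors: fix $A'$ with exactly $p$ positive utilities, and let $\alpha_1\le\cdots\le\alpha_p$ and $\beta_1\le\cdots\le\beta_p$ be the positive utilities of $A$ and $A'$. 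Since the $n-p$ smallest entries of both full vectors are $0$, Lorenz domination at $k=n-p+j$ collapses to $\sum_{i=1}^{j}\alpha_i\ge\sum_{i=1}^{j}\beta_i$ for every $j=1,\dots,p$.

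The remaining task, and the main obstacle, is to upgrade this prefix-sum (generalized-Lorenz) dominance of the two positive parts into the product inequality $\prod_{i=1}^{p}\alpha_i\ge\prod_{i=1}^{p}\beta_i$. The subtlety is that the totals $\sum_i\alpha_i$ and $\sum_i\beta_i$ need not coincide, so this is not fixed-sum majorization but generalized Lorenz dominance. I would close the gap via the monotonicity and concavity of $\log$: prefix-sum dominance among equal-length strictly positive vectors implies $\sum_i u(\alpha_i)\ge\sum_i u(\beta_i)$ for every nondecreasing concave $u$, since $\beta$ is reachable from $\alpha$ by a finite sequence of coordinate decrements (each lowering an increasing $u$) and poorer-to-richer transfers (each lowering a concave $u$). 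Taking $u=\log$ yields $\prod_i\alpha_i\ge\prod_i\beta_i$, so $A$ also maximizes the product of positive utilities and is therefore MNW. The bookkeeping to verify carefully is that the exchange argument only ever touches strictly positive coordinates (keeping $\log$ finite) and that the split at $k=n-p$ is exactly where the zero block ends, which is guaranteed by the normalization from \Cref{prop:nonwasteful}.
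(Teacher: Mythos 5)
Your proposal is correct, and its skeleton matches the paper's: both arguments read leximin directly off Lorenz domination of the output allocation (the paper dismisses this part as trivial), and both reduce MNW to the claim that ascending prefix-sum dominance between two sorted positive vectors implies dominance of their products. The difference lies in how that final claim is discharged, and in one piece of bookkeeping. The paper proves the prefix-sum-to-product implication from scratch by an explicit transformation procedure (Algorithm~\ref{alg:mnw}): it lowers the first coordinate of the dominating sequence to match the dominated one, pushes the surplus to later coordinates while maintaining sortedness and tail prefix-sum dominance, and inducts; every move is a coordinate decrement or a poorer-to-richer transfer, hence only decreases the product. You instead invoke the classical majorization fact---generalized Lorenz (weak supermajorization) dominance implies $\sum_i u(\alpha_i)\geq\sum_i u(\beta_i)$ for every nondecreasing concave $u$---and take $u=\log$, sketching exactly the same decrement-plus-regressive-transfer decomposition; this is the paper's argument in standard-theory clothing, and your remark that all coordinates touched remain strictly positive (so $\log$ stays finite) is the right check, since the intermediate vector majorized by the positive competitor has smallest entry at least the competitor's smallest entry. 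Where you genuinely add something: the paper's product lemma only covers all-positive sequences, and the paper asserts the MNW part is ``straightforward,'' leaving implicit the first clause of the MNW definition (maximizing the number of agents with positive utility). Your argument at $k=n-p$---the $n-p$ smallest utilities of any competitor sum to at most $0$, hence all vanish---handles that clause cleanly and simultaneously justifies restricting the prefix-sum comparison to the positive parts. In short: same route, with your version more complete on the zero-utility bookkeeping and slightly less self-contained on the key lemma.
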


\iftrue
The leximin part in the above lemma is trivial, while the proof for the MNW part is straightforward from the following lemma.
\begin{lemma}
    For two sorted (from smaller to larger) sequences $A$ and $B$ of length $n$ with all positive entries, if $\sum_{i=1}^k A_i\geq \sum_{i=1}^k B_i$ holds for all $k\in[n]$, then $\prod_{i\in[n]} A_i\geq \prod_{i\in[n]} B_i$.
\end{lemma}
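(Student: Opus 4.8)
The plan is to pass to logarithms and prove the equivalent statement $\sum_{i=1}^n \log A_i \ge \sum_{i=1}^n \log B_i$, exploiting that $\log$ is increasing and concave on the positive reals. The hypothesis is precisely that the partial sums of the difference $d_i := A_i - B_i$ are nonnegative: writing $D_k := \sum_{i=1}^k d_i = \sum_{i=1}^k A_i - \sum_{i=1}^k B_i$, we have $D_k \ge 0$ for every $k\in[n]$, and in particular $D_n \ge 0$ from the case $k=n$.

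To connect this partial-sum hypothesis to the product, I would linearly interpolate between the two sequences. For $t\in[0,1]$ set $c_i(t) := (1-t)B_i + t A_i$ and define $g(t) := \sum_{i=1}^n \log c_i(t)$, so that $g(0) = \log\prod_i B_i$ and $g(1) = \log\prod_i A_i$. Two facts make this interpolation well-behaved: first, since $A_i>0$ and $B_i>0$, each $c_i(t)$ is positive on $[0,1]$, so $g$ is well defined and differentiable; second, since both sequences are sorted in ascending order, $c_i(t) \le c_{i+1}(t)$ for each fixed $t$, i.e.\ the interpolant remains sorted.

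Differentiating gives $g'(t) = \sum_{i=1}^n \frac{A_i - B_i}{c_i(t)} = \sum_{i=1}^n d_i\, w_i(t)$, where $w_i(t) := 1/c_i(t)$. Because $c_\cdot(t)$ is sorted ascending and positive, the weights $w_i(t)$ are positive and non-increasing in $i$. I would then apply summation by parts (Abel summation): using $D_0 = 0$, one obtains $g'(t) = \sum_{i=1}^{n-1} D_i\,\bigl(w_i(t)-w_{i+1}(t)\bigr) + D_n\, w_n(t)$. Every summand is nonnegative---$D_i \ge 0$ by hypothesis, $w_i(t)-w_{i+1}(t) \ge 0$ by monotonicity of the weights, and $D_n w_n(t)\ge 0$---so $g'(t)\ge 0$ for all $t\in[0,1]$. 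Integrating yields $g(1)\ge g(0)$, which is exactly $\prod_i A_i \ge \prod_i B_i$.

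The routine parts are verifying that the interpolant stays positive and sorted and differentiating under the (finite) sum; the only place demanding care is the sign bookkeeping in the Abel summation, where one must correctly pair the nonnegativity of the prefix sums $D_i$ with the monotonicity of the reciprocal weights $w_i(t)$. Conceptually, the statement is an instance of the fact that weak supermajorization (domination of the sums of the $k$ smallest entries) implies $\sum_i \phi(A_i)\ge \sum_i \phi(B_i)$ for every increasing concave $\phi$, specialized to $\phi=\log$; the interpolation-plus-Abel argument is simply a self-contained proof of this instance, avoiding any external appeal to Schur-concavity.
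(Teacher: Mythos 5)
Your proof is correct, but it takes a genuinely different route from the paper's. The paper argues by induction on $n$ with an explicit discrete redistribution: it sets $\Delta = A_1 - B_1 \ge 0$ and, when $\Delta > 0$, runs an algorithm that moves this surplus from $A_1$ to later coordinates, producing a sorted sequence $A'$ with $A'_1 = B_1$ whose tail still prefix-sum-dominates that of $B$; since shifting mass from the smallest entry to larger entries can only decrease the product, the inductive hypothesis applied to the tails gives $\prod_{i\in[n]} A_i \ge \prod_{i\in[n]} A'_i \ge \prod_{i\in[n]} B_i$. Most of the paper's effort goes into a loop-invariant verification that this redistribution preserves sortedness and the prefix-sum conditions. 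You instead interpolate $c_i(t) = (1-t)B_i + tA_i$, observe that the interpolant stays positive and sorted, and show that $g(t) = \sum_{i=1}^n \log c_i(t)$ is nondecreasing by pairing Abel summation with two sign facts; your bookkeeping checks out: with $D_0 = 0$ one gets $g'(t) = \sum_{i=1}^{n-1} D_i\bigl(w_i(t) - w_{i+1}(t)\bigr) + D_n w_n(t) \ge 0$, where the monotonicity of the weights $w_i(t) = 1/c_i(t)$ is exactly the concavity of $\log$ and the positivity of $w_n(t)$ is its monotonicity, and integrating over $[0,1]$ gives the claim. As for what each approach buys: the paper's proof is purely discrete and elementary (no calculus), but long and delicate in its invariant-tracking; yours is shorter, conceptually transparent, and generalizes verbatim to $\sum_i \phi(A_i) \ge \sum_i \phi(B_i)$ for any increasing concave $\phi$ (the Tomi\'c--Weyl-type inequality you mention), at the mild cost of invoking differentiation and integration in what is otherwise a purely combinatorial setting.
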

\begin{proof}
This is obviously satisfied when $n=1$. Next, we assume $n\geq 2$.

We define $\Delta:=a_1-b_1$. If $\Delta=0$, the subsequences $A_{2,\ldots,n}$ and $B_{2,\ldots,n}$ satisfy the conditions of this lemma and by reduction, we have $\prod_{i\in[n]} A_i=A_1\times \prod_{i=2}^n A_i\geq B_1\times \prod_{i=2}^n B_i=\prod_{i\in[n]} B_i$.

We assume $\Delta>0$. If we can find an updated subsequence $A'$ obtained from $A$ by decreasing $A_1$ by $\Delta$ and adding $\Delta$ to some numbers among $\{A_2,\ldots,A_n\}$, such that it is ordered from smaller to larger and $\sum_{i=2}^k A_i\geq \sum_{i=2}^k B_i$ holds for all $k\in\{2,\ldots,n\}$, by induction and $A'_1=B_1$, we get $\prod_{i\in[n]} A'_i\geq \prod_{i\in[n]} B_i$. Since the transformation from $A$ to $A'$ can only decrease the product, we have $\prod_{i\in[n]} A_i\geq \prod_{i\in[n]} B_i$.

\begin{algorithm}[t]
    \floatname{algorithm}{Algorithm}
    \caption{An algorithm for transforming a sorted sequence $A$ to a required sequence $A'$ according to a sequence $B$}
    \label{alg:mnw}
    \begin{algorithmic}[1]
        \STATE initialize $A'_i\leftarrow A_i$ for all $i\in[n]$
        \STATE \textbf{for} $k$ from $2$ to $n$ \textbf{do}
        \STATE \hspace{0.1cm} \textbf{if} $\sum_{i=2}^{k} A'_i < \sum_{i=2}^k B_i$ \textbf{then}
        \STATE \hspace{0.3cm} let $\delta=\sum_{i=2}^k B_i-\sum_{i=2}^{k} A'_i$ \label{algmnwforbegin}
        \STATE \hspace{0.3cm} $A'_k\leftarrow A'_k + \delta$
        \STATE \hspace{0.3cm} $A'_1\leftarrow A'_1 - \delta$ \label{algmnwforend}
        \STATE $A'_n\leftarrow A'_n+(A'_1-B_1)$
        \STATE $A'_1\leftarrow B_1$
        \STATE \textbf{return} $A'$
    \end{algorithmic}
\end{algorithm}
It suffices to show we can find such subsequence $A'$.
We can perform the transformation by Algorithm \ref{alg:mnw}.
We now prove the correctness of this algorithm by the induction of $k$. For simplicity, when the variable for the loop is enumerated to the number $k$, we call this Round $k$.
At the beginning, we have $A'_1\geq B_1$ from $A_1\geq B_1$, and $\sum_{i=1}^k A'_i\geq \sum_{i=1}^k B_i$ for all $k\geq 2$. For the monotonicity, $A'_1\leq A'_2$ can always hold since $A'_1$ can only be decreased from $A_1$ and $A'_2$ can only be increased from $A_2$.
We then start the induction.
We assume the first round such that  Steps \ref{algmnwforbegin}-\ref{algmnwforend} will be performed in the following is Round $k$, i.e, $\sum_{i=2}^{k} A'_i < \sum_{i=2}^k B_i$.
Because we have $\sum_{i=1}^k A'_i\geq \sum_{i=1}^k B_i$, $A'_1$ is still weakly larger than $B_1$ after performing Steps \ref{algmnwforbegin}-\ref{algmnwforend}, and $\sum_{i=2}^{k} A'_i$ will be equal to $\sum_{i=2}^k B_i$.
Next, we will show if $k<n$, one of the following two cases will occur.

The first case is that $\sum_{i=2}^{k+1} A'_i \geq \sum_{i=2}^{k+1} B_i$ at this round (Round $k$).
Since we perform Steps \ref{algmnwforbegin}-\ref{algmnwforend} at this round, we have $\sum_{i=2}^{k} A'_i = \sum_{i=2}^k B_i$.
If $k=2$, we have $A'_k=B_k$. If $k>2$, from $\sum_{i=2}^{k-1} A'_i \geq \sum_{i=2}^{k-1} B_i$ (at this round), we have $A'_k\leq B_k$.
Because $B$ is sorted and $\sum_{i=2}^{k+1} A'_i \geq \sum_{i=2}^{k+1} B_i$, we have $A'_k\leq B_k\leq B_{k+1}\leq A'_{k+1}$.

The second case is that $\sum_{i=2}^{k+1} A'_i <\sum_{i=2}^{k+1} B_i$ at this round (Round $k$).
In this case, we will perform  Steps \ref{algmnwforbegin}-\ref{algmnwforend} at next round (Round $k+1$), and we will finally have $\sum_{i=2}^{k+1} A'_i \geq \sum_{i=2}^{k+1} B_i$ (actually the equation is satisfied). From the argument of the first case, we also have $A'_k\leq A'_{k+1}$ at Round $k+1$. Thus, the monotonicity can finally hold between $A'_k$ and $A'_{k+1}$.

Further, since we only transfer the amount between the first $k$ value of sequence $A'$ before and at this round $k$, the inequality $\sum_{i=1}^{k'} A'_i\geq \sum_{i=1}^{k'} B_i$ can still hold for all $k'\geq k$. Thus, we can keep this induction and finally get the satisfied $A'$ after the loop.

For the last two steps of our algorithm, since $A'_1\geq B_1$ can be maintained during the loop, this change won't affect the required conditions of $A'$. Thus, we can finally get the $A'$ we need.
\end{proof}
\fi

\section{Binary Valuations for Both Divisible and Indivisible Goods}
\label{sec:binary-IND-and-DIV}

In this section, we focus on the setting where agents' valuations on both divisible and indivisible goods are binary. We start from the simplest case where there are only two agents.
We first show that no deterministic mechanism can ensure truthfulness and always output an MNW/leximin allocation. We illustrate this by Example~\ref{example:MNWfails}.
This is in sharp contrast to Halpern et al.'s mechanism $\MNW$ and Mechanism~\ref{alg:oneidentical}.

\begin{example}\label{example:MNWfails}
Consider the example shown below.
\begin{table}[H]
\centering
\begin{tabular}{@{}*{4}{c}@{}}
\toprule
& $g_1$ & $d_1$ & $d_2$\\
\midrule
$v_1$ & $1$ & $1$ & $0$ \\
$v_2$ & $1$ & $0$ & $1$ \\
\bottomrule
\end{tabular}
\end{table}

In the allocation returned by a mechanism that can always output an MNW/leximin allocation, each agent $i$ receives the corresponding divisible good $d_i$ and one of them also receives the indivisible good $g_1$.
Without loss of generality, we assume agent $1$ receives $g_1$ and $d_1$.
If the actual valuation of agent $1$ is positive towards all three goods and she reports her valuation truthfully, she will only receive $d_1$ and a half of $d_2$ to achieve an MNW/leximin allocation. Thus, she has an incentive to misreport the valuation we stated in the table to earn a benefit.

\end{example}

\iftrue
Even when allowing randomized mechanisms, we can also show that there is no such truthful (randomized) mechanism that can always output an MNW/leximin allocation with a general number of agents.
\begin{example}
    Consider the following example. We need to allocate five indivisible goods and one divisible good to four agents. The specific valuation profile is listed in the table.
    \begin{table}[H]
        \centering
        \begin{tabular}{@{}*{7}{c}@{}}
        \toprule
        & $g_1$ & $g_2$ & $g_3$ & $g_4$ & $g_5$ & $d_1$\\
        \midrule
        $v_1$ & $0$ & $1$ & $0$ & $1$ & $0$ & $0$\\
        $v_2$ & $1$ & $1$ & $0$ & $0$ & $1$ & $1$ \\
        $v_3$ & $1$ & $1$ & $1$ & $1$ & $1$ & $1$ \\
        $v_4$ & $1$ & $1$ & $1$ & $1$ & $1$ & $1$ \\
        \bottomrule
        \end{tabular}
        \end{table}
    We first consider the MNW allocations in this instance. If agent $1$ only receives one good, the remaining three agents should share the remaining $5$ goods, so the maximum Nash welfare is upper bounded by $1\times \left(\frac{5}{3}\right)^3=\frac{125}{27}$. But if we allocate both $g_2$ and $g_4$ to agent $1$, there are $4$ remaining goods can be allocated to the remaining $3$ agents, and the largest possible Nash welfare is $2\times  \left(\frac{4}{3}\right)^3=\frac{128}{27}$, which is larger than $\frac{125}{27}$. Furthermore, such an allocation exists, which can be achieved by dividing the divisible good $d_1$ into three equal parts, and each remaining agent receives one valuable indivisible good and one (equal) part of $d_1$. Thus, agent $2$ will always value her own bundle $\frac{4}{3}$ in an MNW allocation.

    Next we can consider the case where agent $2$ misreports her valuation towards $g_1$ and $g_5$ from $1$ to $0$.
    If agent $1$ receives both $g_2$ and $g_4$, from the MNW property, $g_1$, $g_3$ and $g_5$ should be allocated to agent $3$ and agent $4$. To achieve the maximum Nash welfare, agent $2$ should receive the whole divisible good $d_1$. The corresponding Nash welfare is $2\times 1 \times 2 \times 1=4$.
    If agent $1$ receives only $1$ good, w.l.o.g., we assume she receives $g_4$.
    Agents $3$ and $4$ should share $g_1$, $g_3$ and $g_5$ according to the MNW property. If one of them receives all three indivisible goods, the Nash welfare is upper bounded by $1\times 1 \times 1 \times 3=3$.
    We then, w.l.o.g., assume agent $3$ receives two of them, and agent $4$ receives the remaining one. The present values of each agent towards her own bundle are $(1,0,2,1)$ and we need to allocate $g_2$ and $d_1$ further.

    If we do not allocate $g_2$ to agent $2$, Nash welfare is upper bounded by $1\times 1 \times 2 \times 2=4$. So we allocate $g_2$ to agent $2$.
    To maximize the Nash welfare, the divisible good $d_1$ should be evenly allocated to agent $2$ and agent $4$, and the corresponding Nash welfare is $1\times 1.5 \times 2 \times 1.5=4.5$, which is the largest.
    After misreporting, agent $2$ will receive the value of $1.5$ in all MNW allocations. Thus, there is no truthful mechanism that can always output an MNW allocation.

    For the leximin allocations in this instance, we can use a similar argument to get the same set of possible allocations before and after the same misreporting from agent $2$. So we can conclude the same result.
\end{example}
\fi

Such an example shows that directly returning an MNW allocation cannot work for our setting.
Instead, we adopt the mechanism $\MNW$ (Theorem~\ref{thm:MNWtruthful}) for only indivisible goods and design a truthful mechanism which can always output an \EFM allocation when there are only two agents.

\begin{algorithm}[t]
    \floatname{algorithm}{Mechanism}
    \caption{A truthful \EFM mechanism for binary valuations on both divisible and indivisible goods with two agents}
    \label{alg:binarytwoagents}
    \begin{algorithmic}[1]
        \STATE use $\MNW$ (Theorem~\ref{thm:MNWtruthful}) to compute an MNW/leximin allocation $(G_1, G_2)$ of $G$
        \STATE initialize $\x_i\leftarrow \mathbf{0}$ for each agent $i\in\{1,2\}$
        \STATE \textbf{if} $v_1(G_1)\neq v_2(G_2)$ \textbf{then}
        \STATE \hspace{0.1cm} let $i^*=\arg\min_{i\in\{1,2\}}\{v_i(G_i)\}$ \label{alg2step1begin}
        \STATE \hspace{0.1cm} \textbf{if} $\exists \bar{k}^*\in[\bar{m}]$ s.t. $v_{i^*}(d_{\bar{k}^*})=1$ \textbf{then}
        \STATE \hspace{0.3cm} $x_{i^*\bar{k}^*}\leftarrow 1$ \label{alg2step1end}
        \STATE \textbf{for} each $\bar{k}=1$ to $\bar{m}$ with $\bar{k}\neq \bar{k}^*$   \textbf{do} \label{alg2step2begin}
        \STATE \hspace{0.1cm} \textbf{if} $v_1(d_{\bar{k}})=v_2(d_{\bar{k}})=1$ \textbf{then}
        \STATE \hspace{0.3cm} $x_{1\bar{k}}\leftarrow 0.5$, $x_{2\bar{k}}\leftarrow 0.5$
        \STATE \hspace{0.1cm} \textbf{else}
        \STATE \hspace{0.3cm} {\bf if} $v_1(d_{\bar{k}})=1$ {\bf then} $x_{1\bar{k}}\leftarrow 1$
        \STATE \hspace{0.3cm} {\bf if} $v_2(d_{\bar{k}})=1$ {\bf then} $x_{2\bar{k}}\leftarrow 1$
        \STATE \textbf{return} allocation $((G_1,\x_1),(G_2,\x_2))$
    \end{algorithmic}
\end{algorithm}

Our mechanism is shown in Mechanism \ref{alg:binarytwoagents}. We first allocate indivisible goods according to $\MNW$ which is truthful and can output an EF1 and leximin allocation. If the valuations of two agents over their own bundles are different, we can allocate one valuable divisible good to the one with a smaller value to eliminate the possible envy, if such a divisible good exists. For each remaining divisible good, we allocate it evenly to all the agents who value it positively. The following theorem shows Mechanism~\ref{alg:binarytwoagents} is truthful and \EFM.

\begin{theorem}
    For allocating both divisible and indivisible goods to two agents with binary valuations, there exists a truthful mechanism that always outputs an \EFM allocation.
\end{theorem}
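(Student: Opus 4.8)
The plan is to verify the two claimed properties of Mechanism~\ref{alg:binarytwoagents} separately, leveraging the truthfulness of $\MNW$ from Theorem~\ref{thm:MNWtruthful} as a black box for the indivisible part. The mechanism has a clean structure: the indivisible goods go through $\MNW$, then at most one divisible good is used as a ``tie-breaker'' to equalize the two bundles, and the remaining divisible goods are split in the canonical way (halved when both value it, given entirely to the sole positive-valuer otherwise). I would first establish \EFM and then truthfulness, since the fairness analysis is more mechanical and clarifies what the mechanism actually guarantees.

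**For \EFM**, the key observation is that after Steps~\ref{alg2step1begin}--\ref{alg2step1end}, the two agents' self-values are ``as balanced as the goods allow.'' I would split into cases based on whether $v_1(G_1)=v_2(G_2)$ and whether the compensating divisible good exists. In the symmetric case where values are already equal, each agent's share of the remaining divisible goods is at least as good as the other's (since the halving is symmetric and each single-valuer good contributes only to its valuer), so I would argue no envy arises, or at worst \EFOne-type envy on the indivisible part. In the asymmetric case, the disadvantaged agent $i^*$ receives a whole unit-valued divisible good (if one exists), which lifts her value by exactly~$1$; since $\MNW$ outputs an \EFOne allocation, the indivisible-part gap between the two bundles is at most one good, and I would check that the extra unit of divisible good closes precisely this gap enough to satisfy the envy-free clause (Point~2 of \Cref{def:EFM}) whenever the envied divisible bundle has positive value, and otherwise falls back to the \EFOne clause (Point~1). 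The remaining divisible goods are distributed symmetrically and so preserve whatever balance was achieved.

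**For truthfulness**, I would argue that an agent cannot profitably deviate on either the indivisible report or the divisible report. For the indivisible part, truthfulness of $\MNW$ guarantees the agent's indivisible value $v_i(G_i)$ is not improvable by misreporting indivisible values, but I must also track how a misreport on indivisibles perturbs which agent becomes $i^*$ and hence the divisible allocation. For the divisible part, the crucial point is that the halving rule and the ``give to the valuer'' rule are both \emph{report-independent} in the sense that an agent's share of each such good is determined by whether she declares positive value; misreporting a $1$ as $0$ only forfeits value, and misreporting a $0$ as $1$ gains nothing real (the good is worthless to her). The one subtle lever is the tie-breaking good $d_{\bar{k}^*}$ in Steps~\ref{alg2step1begin}--\ref{alg2step1end}: an agent might try to manipulate whether she is designated $i^*$ to capture a whole divisible good.

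**The hard part will be** the interaction between the $\MNW$ stage and the tie-breaking stage in the truthfulness argument: I must rule out a deviation where an agent deliberately \emph{worsens} her reported indivisible bundle (changing the comparison $v_1(G_1)$ versus $v_2(G_2)$) in order to be selected as $i^*$ and thereby receive a full unit-valued divisible good worth~$1$. The analysis must show that any value gained from the tie-breaking good is dominated by the value lost on the indivisible side (or on the foregone half-shares of other divisible goods), so that the total bundle value cannot strictly increase. Here I expect to use the leximin/Lorenz-domination structure of $\MNW$ together with the fact that being $i^*$ means one had the \emph{smaller} indivisible value to begin with, bounding the achievable net gain by~$1$ and matching it against the at-least-$1$ loss incurred by degrading one's indivisible report. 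With only two agents and binary values the bookkeeping is finite and case-based, so I would enumerate the deviation types (report a true $1$ as $0$ and vice versa, on indivisible versus divisible goods) and verify each is non-improving.
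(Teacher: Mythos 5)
Your plan follows the same route as the paper's proof: allocate the indivisible goods with $\MNW$, use at most one divisible good as a compensating transfer to the reported-minimum agent $i^*$, split the rest symmetrically, and then argue \EFM and truthfulness separately. Your \EFM half is essentially the paper's argument and is sound: when $v_1(G_1)=v_2(G_2)$ the allocation is envy-free outright; otherwise the unit-valued tie-break good offsets the \EFOne gap, and when no such good exists we have $v_{i^*}(D)=0$, so Point~1 of \Cref{def:EFM} applies to $i^*$ while the other agent never envies. Your observation that misreports on divisible goods cannot help is also the paper's first step.

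The genuine gap is in your treatment of the $\MNW$/tie-break interaction. You propose to rule out manipulation of the $i^*$ designation by ``bounding the achievable net gain by~$1$ and matching it against the at-least-$1$ loss incurred by degrading one's indivisible report.'' This handles only deviations with $v_1(G_1)\geq v_1(G_1')+1$ (the paper's first case). But the truthfulness of $\MNW$ guarantees only $v_1(G_1)\geq v_1(G_1')$, and a misreport can change the comparison in Step~\ref{alg2step1begin} \emph{without any true loss on the indivisible side}: the comparison is made on \emph{reported} values, and $v_1'(G_1')$ can strictly exceed $v_1(G_1')=v_1(G_1)$, or the misreport can change which allocation $\MNW$ selects. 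In that zero-loss case your gain-versus-loss matching proves nothing, since the potential gain from capturing the tie-break good (up to~$1$, or~$0.5$ from dissolving agent~$2$'s tie-break entitlement) is matched against a loss of~$0$. What is needed --- and what the paper does --- is a structural impossibility argument: if $v_1(G_1')=v_1(G_1)$ and the flip of $i^*$ favors agent~$1$ (from~$2$ or nonexistent to~$1$, or from~$2$ to nonexistent), then $v_2(G_2')\geq v_1'(G_1')\geq v_1(G_1')=v_1(G_1)\geq v_2(G_2)$ with a strict inequality forced, so the allocation $(G_1',G_2')$ evaluated at the \emph{true} profile Pareto-dominates, hence leximin-dominates, $(G_1,G_2)$, contradicting that $\MNW$ outputs a leximin allocation on the true profile. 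You do name the leximin/Lorenz-domination structure as a tool, but your sketch never sets up this contradiction, and the quantitative matching you describe cannot substitute for it; completing the proof requires this separate case.
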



\begin{proof}
We first show the output allocation  $\A=(A_1,A_2)$ is \EFM. Since the allocation $(G_1,G_2)$ output by $\MNW$ is EF1, so it is also \EFM. If $v_1(G_1)=v_2(G_2)$, such an allocation is envy-free. If $v_1(G_1)\neq v_2(G_2)$, after performing Steps \ref{alg2step1begin}-\ref{alg2step1end}, either the present allocation is envy-free, or $v_{i^*}(D)=0$, which means the output allocation is always \EFM for agent $v_{i^*}$ no matter how divisible goods are allocated.
The remaining steps are just to allocate each divisible good evenly to the agents who positively value them, which will not destroy the envy-freeness. Thus, the final allocation is \EFM.

We next show this mechanism is truthful. We assume we receive the allocation $\A=(A_1,A_2)$ given the true valuation profile $(v_1,v_2)$. We further assume we get the allocation $\A'=(A'_1,A'_2)$ when agent $1$ misreports her valuation to $v'_1$.

In our mechanism, because the allocation of divisible goods after Step \ref{alg2step2begin} is to evenly allocate all goods to the agents who value them, misreporting can make no benefit in this part. For Steps \ref{alg2step1begin}-\ref{alg2step1end}, the decision depends only on the valuations of the indivisible goods, there is also no gain from misreporting the valuations on divisible goods. So the only possible way to benefit is to misreport the valuations on indivisible goods.

Since $\MNW$ is truthful, we have $v_1(G_1)\geq v_1(G'_1)$. We consider the first case where $v_1(G_1)\geq v_1(G'_1)+1$. By misreporting the valuation, the largest benefit is from the change of $i^*$ from agent $2$ to agent $1$ during Steps \ref{alg2step1begin}-\ref{alg2step1end}. Such benefit is upper bounded by $2\times 0.5=1$, because, if agent $1$ values the allocated good positively, she will receive half of this good if such a good is allocated after Step \ref{alg2step2begin}. Thus, there is no need to misreport in this case.

For the second case where $v_1(G_1)=v_1(G'_1)$, we first have $v'_1(G'_1)\geq v_1(G'_1)$ because of the leximin allocation with binary valuations. The benefit of the misreporting is also from the change of $i^*$ during Steps \ref{alg2step1begin}-\ref{alg2step1end}.
If $i^*$ is $2$ under $(v_1,v_2)$ and does not exist under $(v'_1,v_2)$, then $v_2(G'_2)=v'_1(G'_1)\geq v_1(G'_1)=v_1(G_1)>v_2(G_2)$, this violates the leximin property of $(G_1,G_2)$ under $(v_1,v_2)$.
If $i^*$ is $2$ under $(v_1,v_2)$ and $1$ under $(v'_1,v_2)$, the leximin property of $(G_1,G_2)$ under $(v_1,v_2)$ is also violated from the similar argument.
Similar analysis also holds for the case where $i^*$ does not exist under $(v_1,v_2)$ and is $1$ under $(v'_1,v_2)$.
Since all three cases which can benefit agent $1$ cannot occur, misreporting the valuations has no benefit. Thus, this mechanism is truthful.
\end{proof}

\begin{algorithm}[t]
    \floatname{algorithm}{Mechanism}
    \caption{A truthful \EFM mechanism for binary valuations on indivisible goods and a single divisible good}
    \label{alg:binaryonediv}
    \begin{algorithmic}[1]
        \STATE use $\MNW$ (Theorem~\ref{thm:MNWtruthful}) to compute an MNW/leximin allocation $(G_1,\ldots,G_n)$ of $G$
        \STATE initialize $\x_i\leftarrow \mathbf{0}$ for each agent $i\in[n]$
        \STATE let $T=\arg\min_{i\in[n]}\{v_i(G_i)\}$ \label{alg3deft}
        \STATE \textbf{for} each $i\in T$ \textbf{do}
        \STATE \hspace{0.1cm} $x_{i1}\leftarrow \frac{1}{|T|}$
        \STATE \textbf{return} allocation $((G_1,\x_1),\ldots,(G_n,\x_n))$
    \end{algorithmic}
\end{algorithm}

If we consider the setting with more than two agents, we can also design an \EFM and truthful mechanism for allocating indivisible goods and a single divisible good, which is shown in Mechanism \ref{alg:binaryonediv}.
It is worth noting that, in this setting, agents value the divisible good at either~$1$ or~$0$, which is different from the setting in \Cref{sec:binary-IND+money} where all agents have an identical value over the single divisible good.
In this mechanism, after allocating all indivisible goods according to $\MNW$, we allocate the only divisible good evenly to all agents with the smallest $v_i(G_i)$.

\begin{theorem}
    For allocating indivisible goods and a single divisible good with binary valuations, there exists a truthful mechanism that always outputs an \EFMzero (thus \EFM) allocation.
\end{theorem}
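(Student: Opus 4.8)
The plan is to prove that Mechanism~\ref{alg:binaryonediv} is both \EFMzero and truthful. For the fairness part, I would first invoke that $\MNW$ outputs a Lorenz dominating (and hence leximin and \EFOne) allocation $(G_1,\ldots,G_n)$ of the indivisible goods by Proposition~\ref{thm:Lorenz}. The divisible good is given entirely (split evenly) to the set $T$ of agents with the minimum $v_i(G_i)$. I would verify \EFMzero by case analysis on an ordered pair $(i,j)$. Note that every agent in $T$ ends up with a bundle worth $v_i(G_i)+\frac{1}{|T|}$ (for those who value the divisible good at~$1$; those in $T$ valuing it~$0$ are covered by free-disposal considerations), while agents outside $T$ keep value $v_i(G_i)$ strictly larger than the minimum. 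The key point is that any $j\notin T$ has $v_j(G_j) \geq v_j(G_i) + 1$ whenever $i \in T$ is envied with a full unit of divisible good, which must be argued from leximin/Lorenz domination: since $v_j(G_j)$ is strictly above the minimum and all values are integers, $v_j(G_j)\ge \min_\ell v_\ell(G_\ell)+1$, absorbing the extra $\frac{1}{|T|}\le 1$ of divisible good. For pairs where the envied bundle has zero divisible value, the \EFOne guarantee of the underlying indivisible allocation handles Point~1 of \Cref{def:EFMzero}.

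The truthfulness part is the more delicate half and I expect it to be the main obstacle. I would rely on the truthfulness of $\MNW$, giving $v_i(G_i)\ge v_i(G_i')$ for the true reporter, and on the fact that with binary valuations an \MNW/leximin allocation is Lorenz dominating so misreporting cannot improve one's indivisible bundle value. The benefit from misreporting can only come from manipulating membership in the set $T$ so as to grab a $\frac{1}{|T|}$ share of the divisible good. I would argue, analogously to the two-agent proof, that the marginal gain from the divisible good is at most~$1$ (in fact at most $\frac{1}{|T|}\le 1$), and that any scenario in which an agent's indivisible value stays the same while she newly enters $T$ (or a smaller $T$) through misreporting would contradict the leximin property of the truthful allocation. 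Concretely, suppose agent~$i$ misreports from $v_i$ to $v_i'$ and thereby joins $T$ or shrinks $T$; I would show that the resulting profile of indivisible values leximin-dominates the truthful one from $i$'s perspective, contradicting that $\MNW$ already returned a leximin allocation on the true profile.

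The crux is controlling the interaction between two effects: a misreport may lower $v_i(G_i')$ (bad for the agent on the indivisible side) but raise her divisible share. I would formalize that whenever the divisible-good gain is positive, it is bounded by~$1$, and couple this with the integrality of indivisible values to show the net change is nonpositive. The case split would mirror the two-agent theorem: (i) if $v_i(G_i)\ge v_i(G_i')+1$, the at-most-$1$ divisible gain cannot compensate the lost indivisible value; (ii) if $v_i(G_i)=v_i(G_i')$, then any change in $T$-membership favorable to~$i$ would force $v_i(G_i')$ to equal or exceed the true minimum, again violating leximin optimality of the true allocation. I would conclude that no profitable deviation exists, establishing truthfulness, and combine this with the fairness argument to complete the proof.
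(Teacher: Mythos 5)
Your proposal follows essentially the same route as the paper's own proof: for fairness you combine the \EFOne guarantee of $\MNW$ with the integrality gap between agents inside and outside $T$ to absorb the at-most-$1$ worth of divisible good, and for truthfulness you use the identical case split --- either $v_i(G_i)\geq v_i(G_i')+1$, where the bounded divisible gain cannot compensate the lost indivisible value, or $v_i(G_i)=v_i(G_i')$, where any favorable change of $T$-membership would let the misreported allocation, evaluated under the true profile, leximin-dominate the truthful one, contradicting that $\MNW$ returns a leximin allocation. The only difference is cosmetic: you spell out the integrality/Lorenz-domination justification that the paper leaves implicit, so the two arguments coincide in substance.
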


\begin{proof}
Since $\MNW$ returns an EF1 allocation and the remaining of our mechanism is just to allocate the single divisible good evenly to agents with the smallest $v_i(G_i)$, no agent in $T$ will envy others in $T$.
Because there is only one divisible good, no agent in $[n]\setminus T$ will envy the agents in $T$ after allocating this good. Thus, the output allocation is \EFMzero.

We then prove this mechanism is truthful. We assume we receive the allocation $\A=(A_1,\ldots,A_n)$ given the true valuation profile $(v_1,\ldots,v_n)$. We further assume we get the allocation $\A'=(A'_1,\ldots,A'_n)$ when agent $1$ misreports her valuation to $v'_1$.

From the truthfulness of $\MNW$, we have $v_1(G_1)\geq v_1(G'_1)$. Since there is only one divisible good, there is no incentive to misreport if $v_1(G_1)\geq v_1(G'_1)+1$.
We then consider the case where $v_1(G_1)=v_1(G'_1)$. If agent $1$ can benefit from misreporting, this means that after misreporting, agent $1$ is in the set $T$ at Step \ref{alg3deft}, and when agent $1$ truthfully reports $v_1$, either she is not in the set $T$ or the size of the set $T$ is larger than that after misreporting.

Since $\MNW$ is leximin, under binary valuations, we have $v'_1(G'_1)\geq v_1(G'_1)=v_1(G_1)$. Because, after misreporting the valuation, agent $1$ is in the set $T$ which contains all agents with the smallest $v_i(G_i)$. Both two cases mentioned above violate the leximin property of the allocation $(G_1,\ldots,G_n)$ of $G$ under the valuation $(v_1,\ldots,v_n)$.  Thus, this mechanism is truthful.
\end{proof}

\begin{remark}
    This mechanism is no longer \EFMzero and \EFM when there is more than one divisible good, because some envy may occur from the agent outside the set $T$ to the agent in $T$ after allocating multiple divisible goods.
\end{remark}

\section{Conclusion}

In this paper, we have studied truthful and fair (i.e., EFM) mechanisms when allocating mixed divisible and indivisible goods.
Our strong impossibility result shows that truthfulness and EFM are incompatible even if there are only two agents and two goods.
We then designed truthful and EFM mechanisms in various restricted settings.

In future research, it would be intriguing to completely determine the compatibility between truthfulness and EFM when agents have binary valuations over all goods.
Given by our impossibility result, another future direction is to consider weaker notions of truthfulness, e.g., \emph{maximin strategy-proofness}~\citep{brams2006better}, \emph{not obviously manipulation (NOM)}~\citep{troyan2020obvious,ortega2022obvious}, and \emph{risk-averse truthfulness}~\citep{BuSoTa23}.

\clearpage
\section*{Acknowledgments}
The research of Biaoshuai Tao was supported by the National Natural Science Foundation of China (Grant No. 62102252).
The research of Shengxin Liu was partially supported by the National Natural Science Foundation of China (Grant No. 62102117), by the Shenzhen Science and Technology Program (Grant No. RCBS20210609103900003), and by the Guangdong Basic and Applied Basic Research Foundation (Grant No. 2023A1515011188), and was partially sponsored by CCF-Huawei Populus Grove Fund (Grant No. CCF-HuaweiLK2022005).
Xinhang Lu was partially supported by ARC Laureate Project FL200100204 on ``Trustworthy AI''.

\bibliographystyle{plainnat}
\bibliography{bibliography}
\end{document}